\title{Change-Of-Bases Abstractions for Non-Linear Systems.}
\author{Sriram Sankaranarayanan \\
Department of Computer Science \\
University of Colorado, Boulder, CO, USA.\\
\url{srirams@colorado.edu}
}
\date{\today}
\begin{document}




\maketitle

\begin{abstract}
  We present abstraction techniques that transform a given non-linear
  dynamical system into a linear system or an algebraic system
  described by polynomials of bounded degree, such that, invariant
  properties of the resulting abstraction can be used to infer
  invariants for the original system. The abstraction techniques rely
  on a change-of-basis transformation that associates each state
  variable of the abstract system with a function involving the state
  variables of the original system. We present conditions under which
  a given change of basis transformation for a non-linear system can
  define an abstraction. Furthermore, the techniques developed here
  apply to continuous systems defined by Ordinary Differential
  Equations (ODEs), discrete systems defined by transition systems and
  hybrid systems that combine continuous as well as discrete
  subsystems.
  
  The techniques presented here allow us to discover, given a
  non-linear system, if a change of bases transformation involving
  degree-bounded polynomials yielding an algebraic abstraction
  exists. If so, our technique yields the resulting abstract system,
  as well. This approach is further extended to search for a change of
  bases transformation that abstracts a given non-linear system into a
  system of linear differential inclusions.  Our techniques enable the
  use of analysis techniques for linear systems to infer invariants
  for non-linear systems. We present preliminary evidence of the
  practical feasibility of our ideas using a prototype implementation.
\end{abstract}



\section{Introduction}
In this paper, we explore a class of abstractions for non-linear
autonomous systems (continuous, discrete and hybrid systems) using
\emph{Change-of-Bases} (CoB) transformations. CoB transformations are
obtained for a given system by expressing the dynamics of the system
in terms of a new set of variables that relate to the original system
variables through the CoB transformation. Such a transformation is
akin to studying the system under a new set of ``bases''.  We derive
conditions on the transformations such that (a) the CoB
transformations also define an \emph{autonomous system} and (b) the
resulting system abstracts the original system: i.e., all invariants
of the abstract system can be transformed into invariants for the
original system.  Furthermore, we often seek abstract systems through
CoB transformations whose dynamics are of a simpler form, more
amenable to automatic verification techniques.  For instance, it is
possible to use CoB transformations that relate an ODE with non-linear
right-hand sides to an affine ODE, or transformations that reduce the
degree of a system with polynomial right-hand sides. If such
transformations can be found, then safety analysis techniques over the
simpler abstract system can be used to infer safety properties of the
original system.

In this paper, we make two main contributions: (a) we define CoB
transformations for continuous, discrete and hybrid systems and
provide conditions under which a given transformation is valid; (b) we
provide search techniques for finding CoB transformations that result
in a polynomial system whose right-hand sides are degree limited by
some limit $d \geq 1$. Specifically, the case $d=1$ yields an affine
abstraction; and (c) we provide experimental evidence of the
application of our techniques to a variety of ordinary differential
equations (ODEs) and discrete programs.

The results in this paper extend our previously published results that
appeared in HSCC 2011~\cite{Sankaranarayanan/2011/Automatic}. The
contributions of this paper include (a) an extension from linearizing
CoB transformations to degree-bounded polynomial CoB transformations,
(b) extending the theory from purely continuous system to discrete and
hybrid systems, and (c) an improved implementation that can handle
hybrid systems with some evaluation results using this implementation.
On the other hand, our previous work also included an extension of the
theory to differential inequalities and iterative techniques over
cones. These extensions are omitted here in favor of an extended
treatment of the theory of differential equation abstractions for
continuous, discrete and hybrid systems.

\subsection{Motivating Examples}

In this section, we motivate the techniques developed in this paper by
means of a few illustrative examples involving purely continuous ODEs
and purely discrete programs .

Our first example concerns a continuous system defined by a system of
Ordinary Differential Equations (ODEs):
\begin{example}\label{Ex:motivating-example}
Consider a continuous system over $\{x,y\}$:
$ \dot{x} = x y + 2x,\ \  \dot{y}= -\frac{1}{2} y^2 + 7 y + 1$, 
with initial conditions given by the set $ x \in [0,1],\ y \in [0,1]$.
Using the transformation $\alpha: (x,y) \mapsto (w_1,w_2,w_3)$ wherein
$\alpha_1(x,y) = x$, $\alpha_2(x,y) = x y $ and $\alpha_3(x,y) = x
y^2$, we find that the dynamics over $\vec{w}$ can be written as
\[ \dot{w_1} = 2 w_1 + w_2,\  \dot{w_2} = w_1 + 9 w_2 + \frac{1}{2} w_3, \dot{w_3} = 2 w_2 + 16 w_3 \]

Its initial conditions are given by $w_1 \in [0,1],\ w_2 \in
[0,1],\ w_3 \in [0,1]$. We analyze the system using the TimePass tool
as presented in our previous
work~\cite{Sankaranarayanan+Sipma+Manna/06/Fixed} to obtain polyhedral invariants:
\[ \begin{array}{l}
 -w_1 + 2 w_2 \geq -1\ \land\ w_3 \geq 0 \ \land\ w_2 \geq 0\ \land\\
 -16 w_1 + 32 w_2 - w_3 \geq - 17\ \land\ 32 w_2 - w_3 \geq -1\ \land\\
 2 w_1 - 4 w_2 + 17 w_3 \geq -4\ \land\ 286 w_1 - 32 w_2 + w_3 \geq -32 \ \land\ \\
\cdots 
\end{array}\]
Substituting back, we can infer polynomial inequality invariants on
the original system including,
\[\begin{array}{l}
-x + 2 xy \geq -1\ \land\ xy^2 \geq 0\ \land\ -16 x+ 32 xy - xy^2 \geq -17\ \\
x \geq 0\ \land\ 2x - 4 xy + 17 xy^2 \geq -4\ \land\ \cdots 
\end{array}\]

Finally, we integrate the linear system to infer the following
conserved quantity for the underlying non-linear system:
\[ \begin{array}{l}
\left(  \frac{e^{-9 t}}{51}+\frac{1}{102} \left(50+7 \sqrt{51}\right) e^{\left(-9+\sqrt{51}\right) t}+\frac{1}{102} \left(50-7 \sqrt{51}\right) e^{-\left(9+\sqrt{51}\right) t}\right)\ x +  \\
\left(  -\frac{1}{102} e^{-9 t-\left(9+\sqrt{51}\right) t} \left( 
\begin{array}{l} 
7 e^{9 t}-\sqrt{51} e^{9 t}-14 e^{\left(9+\sqrt{51}\right) t}+\\
7 e^{9 t+\left(-9+\sqrt{51}\right) t+\left(9+\sqrt{51}\right) t}+ \\
\sqrt{51} e^{9 t+\left(-9+\sqrt{51}\right) t+\left(9+\sqrt{51}\right) t}
\end{array}\right)
\right)\ xy + \\
\left( \frac{1}{204} e^{-9 t-\left(9+\sqrt{51}\right) t} \left(e^{9 t}-2 e^{\left(9+\sqrt{51}\right) t}+e^{9 t+\left(-9+\sqrt{51}\right) t+\left(9+\sqrt{51}\right) t}\right) \right) xy^2 
\end{array}\]

Finally, if $x(0) \not = 0$, the map $\alpha$ is invertible and
therefore, the ODE above can be integrated.

Note that not every transformation yields a linear abstraction. In
fact, most transformations will not define an abstraction. The
conditions for an abstraction are discussed in
Section~\ref{Section:covAbstraction}. \hfill\halmos

\end{example}

\begin{figure}[t]
\begin{tabular}{cc}
\begin{minipage}{5.5cm}
\begin{lstlisting}
proc computeP(int k)
  int x,y; 
  assert( K > 0);
  x := y := 0;
  while ( y < k ){
       x := x + y * y;
       y := y + 1;
  }
end-function
\end{lstlisting}
\end{minipage} & 
\begin{minipage}{5.5cm}
\begin{lstlisting}
proc computePAbs(int k)
  int x,y,y2; 
  assert( K > 0);
  x := y := y2 := 0;
  while ( y < k ){
       x := x + y2;
       y2 := y2 + 2 * y + 1;
       y := y + 1;
  }
end-function
\end{lstlisting}
\end{minipage} 
\end{tabular}
 
\caption{Program showing a benchmark example proposed by Petter~\cite{Petter/2004/Invarianten} and its abstraction obtained by a change of basis $(x \mapsto x, y \mapsto y, y2 \mapsto y^2)$. }\label{Fig:motivating-example-discrete}
\end{figure}

Next, we motivate our approach on purely discrete programs, showing
how CoB transformations can linearize a discrete program with
non-linear assignments, modeled by a \emph{transition
  system}~\cite{Manna+Pnueli/95/Temporal}. In turn, we show how
invariants of the abstract linearized program can be transferred back.

\begin{example}\label{Ex:motivating-example-discrete}
Figure~\ref{Fig:motivating-example-discrete} shows an example proposed
originally by Petter~\cite{Petter/2004/Invarianten} that considers a
program that sums up all squares from $1$ to $K^2$ for some input
$K \geq 0$. Consider a very simple change of basis transformation
wherein we add a new variable ``y2'' that tracks the value of $y^2$ as
the loop is executed. It is straightforward to write assignments for
``y2'' in terms of itself, $x,y$. Doing so for this example does not
necessitate the tracking of higher degree terms such as $y^3, x^2y^2$
and so on. Finally, the resulting program has affine guards and
assignments, making it suitable for polyhedral abstract
interpretation~\cite{Cousot+Halbwachs/78/Automatic,
Halbwachs+Proy/97/Verification}. The polyhedral analysis yields linear
invariants at the loop head and the function exit in terms of the
variables $x,y,y2$. We may safely substitute $y^2$ in place of $y2$
and obtain invariants over the original program. The non-linear
invariants obtained at the function exit are shown below:
\[ \begin{array}{c}
4  x + 18  y - 7  y^2 \geq 11 \ \land\   4   \leq  2  x + 7  y - 3  y^2 \ \land\   9   \leq   x +12  y - 3  y^2 \ \land\   1   \leq  y \ \land\   \\ 
 3  y - y^2  \leq 2  \ \land\   5  y - y^2  \leq 6 \ \land\    6  y - y^2  \leq 9 \ \land\   k = y 
\end{array}\]

In this example, the change of basis to $y^2$ can, perhaps, be inferred from the
syntax of this program. However, we demonstrate other situations in
this paper, wherein the change of basis cannot be inferred from the
expressions in the program using syntactic means.  

The invariant
\[ 6x = 2 k^3 + 3 k^2 + k \,,\]
discovered by Petter and many other subsequent
works such as the complete approach for P-solvable loops by
Kovacs~\cite{Kovacs/2008/Reasoning} can also be discovered by Karr's
analysis when the term $y^3$ is introduced into the change-of-basis
transformations in addition to $y^2$.\ \hfill\halmos
\end{example}

\subsection{Related Work}

Many different types of \emph{discrete abstractions} have been studied
for hybrid systems~\cite{Alur+Others/2000/Discrete} including
predicate abstraction~\cite{Tiwari/2008/Abstractions} and abstractions
based on invariants~\cite{Oishi+Others/2008/Invariance}.  The use of
counter-example guided iterative abstraction-refinement has also been
investigated in the past (Cf. Alur et
al.~\cite{Alur+Dang+Ivancic/2003/Counter} and Clarke et
al.~\cite{Clarke+Others/2003/Counterexample}, for example). In this
paper, we consider continuous abstractions for continuous systems
specified as ODEs, discrete systems and hybrid systems using a change
of bases transformation. As noted above, not all transformations can
be used for this purpose.  Our abstractions for ODEs bear similarities
to the notion of topological semi-conjugacy between flows of dynamical
systems~\cite{Meiss/2007/Differential}.

Previous work on invariant generation for hybrid system by the author
constructs invariants by assuming a desired template form (ansatz)
with unknown parameters and applying the ``consecution'' conditions
such as \emph{strong consecution} and \emph{constant scale}
consecution~\cite{Sankaranarayanan+Sipma+Manna/2008/Constructing}.
Matringe et al. present generalizations of these conditions using
morphisms~\cite{Matringe+Others/2009/Morphisms}. Therein, they observe
that strong and constant scale consecution conditions correspond to a
linear abstraction of the original non-linear system of a restrictive
form. Specifically, the original system is abstracted by a system of
the form $\frac{dx}{dt} = 0$ for strong consecution, and a system of
the form $\frac{dx}{dt} = \lambda x$ for constant-scale
consecution. This paper builds upon this observation by Matringe et
al. using fixed-point computation techniques to search for a general
linear abstraction that is related to the original system by a change
of basis transformation.  Our work is also related to the technique of
differential invariants proposed by Platzer et
al.~\cite{Platzer+Clarke/09/Differential}. At a high level Platzer et
al. attempt to prove an invariant $p = 0$ for a continuous system
(often a subsystem of a larger hybrid system) using differential
invariant rule wherein the state assertion $\frac{dp}{dt} = 0$ is
established. Likewise, to prove $p \leq 0$, it seeks to establish
$\frac{dp}{dt} \leq 0$. In this paper, we may view the same process
through a CoB transformation $ w \mapsto p(x)$ that allows us to write
the abstract dynamics as $\frac{dw}{dt} = 0$. Going further, we seek
to compute $\vec{w} \mapsto \alpha(\vec{x})$ that maps the dynamics to
an affine or a polynomial system. On the other hand, differential
invariants allow us to reason about Boolean combinations of assertions
and embed into a rich dynamic-logic framework combining discrete and
continuous actions on the state. The work here and its extension to
differential inequalities~\cite{Sankaranarayanan/2011/Automatic} can
be utilized in such a framework.

Fixed point techniques for deriving invariants of differential
equations have been proposed by the author in previous
papers~\cite{Sankaranarayanan+Sipma+Manna/06/Fixed,Sankaranarayanan/2010/Invariant}
These techniques have addressed the derivation of polyhedral
invariants for affine
systems~\cite{Sankaranarayanan+Sipma+Manna/06/Fixed} and algebraic
invariants for systems with polynomial right-hand
sides~\cite{Sankaranarayanan/2010/Invariant}.  In this technique, we
employ the machinery of fixed-points. Our primary goal is not to
derive invariants, per se, but to search for abstractions of
non-linear systems into linear systems. 

\paragraph{Discrete Systems:} There has been a large body of work focused on the use of algebraic
techniques for deriving invariants of programs. Previous work by the
author focuses on deriving polynomial equality invariants for
programs, automatically, by setting up template polynomial invariants
with unknown coefficients and deriving constraints on values of these
coefficients to ensure
invariance~\cite{Sankaranarayanan+Sipma+Manna/2008/Constructing,Sankaranarayanan+Others/2004/Non-Linear}.
Carbonell et al. present loop invariant generation techniques by
solving recurrences and computing polynomial ideas to capture
algebraic properties of the reachable
states~\cite{Carbonell+Kapur/2004/Automatic} and subsequently using
the descending abstract interpretation over ideals with widening over
ideals to ensure termination~\cite{Carbonell+Kapur/2004/Abstract}. The
approach is extended to polyhedral cones generated by polynomial
inequalities to generate polynomial inequality
invariants~\cite{Bagnara+Carbonell+Zaffanella/2005/Generation}.
Another set of related techniques concern the use of linear invariant
generation techniques for polynomial equality invariant generation.
M{\"u}ller-Olm and Seidl explore the use of linear algebraic
techniques, wherein a vector space of matrices are used to summarize
the transformation from the initial state of a program to a given
location. This space is then used to generate polynomial invariants of
the program~\cite{Muller-Olm+Seidl/2004/Precise}. Likewise, the work
of Col{\'o}n explores degree-bounded restrictions to Nullstellensatz
to enable linear algebraic techniques to generate polynomial
invariants~\cite{Colon/04/Approximating}. More recently, the work of
Kovacs uses sophisticated techniques for solving recurrence equations
over so-called P-solvable loops to generate polynomial invariants for
them~\cite{Kovacs/2008/Reasoning}.

Finally, our approach is closely related to
\emph{Carlemann embedding} that can be used to linearize a given
differential equation with polynomial right-hand
sides~\cite{Kowalski+Steeb/1991/Non-Linear}. The standard Carlemann
embedding technique creates an infinite dimensional linear system,
wherein, each dimension corresponds to a monomial or a basis
polynomial. In practice, it is possible to create a linear
approximation with known error bounds by truncating the monomial terms
beyond a degree cutoff.  Our approach for differential equation
abstractions can be \emph{roughly} seen as a search for a ``finite
submatrix'' inside the infinite matrix created by the Carleman
linearization. The rows and columns of this submatrix correspond to
monomials such that the derivative of each monomial in the submatrix
is a linear combination of monomials that belong the submatrix.  Note,
however, that while Carleman embedding is defined using some basis for
polynomials (usually power-products), our approach can derive
transformations that may involve polynomials as opposed to just
power-products.

\paragraph{Organization:} The rest of this paper presents our approach
for Ordinary Differential Equations in
Section~\ref{Section:covAbstraction}. The ideas for discrete systems
are presented in Section~\ref{Section:prgTransformation} by first
presenting the theory for simple loops and then extending it to
arbitrary discrete programs modeled by transition systems. The
extensions to hybrid systems are presented briefly by suitably merging
the techniques for discrete programs with those for ODEs. Finally,
Section~\ref{Section:implementation} presents an evaluation of the
ideas presented using our implementation that combines an automatic
search for CoB transformations with polyhedral invariant generation
for continuous, discrete and hybrid
systems~\cite{Cousot+Halbwachs/78/Automatic,Halbwachs+Proy/97/Verification,Sankaranarayanan+Sipma+Manna/06/Fixed}.

\section{Abstractions for ODEs}\label{Section:covAbstraction}
We first present some preliminary definitions for continuous systems
defined by Ordinary Differential Equations (ODEs).
\subsection{Preliminaries: Continuous Systems}\label{Section:prelims}
Let $\reals$ denote the field of real numbers.  Let $x_1,\ldots,x_n$
denote a set of variables, collectively represented as $\vx$. The set
$\reals[\vx]$ denotes the ring of multivariate polynomials over
$\reals$.

A \emph{power-product} over $\vx$ is of the form $x_1^{r_1}
x_2^{r_2}\cdots x_n^{r_n}$, succinctly written as $\vx^{\vec{r}}$,
wherein each $r_i \in \mathbb{N}$. The \emph{degree} of a monomial
$\vx^{\vec{r}}$ is given by $\sum_{i=1}^n r_i = \vec{1}\cdot \vec{r}$.
A \emph{monomial} is of the form $c \cdot m$ where $c \in \reals$ and
$m$ is a power-product.  A multivariate polynomial $p$ is a sum of
finitely many monomial terms: $p = \sum_{\vec{r} \in \reals^n}
c_{r} \vx^{\vr}$. The degree of a multivariate polynomial $p$
is the maximum over the degrees of all monomial terms $m$ that \emph{occur}
in $p$ with a non-zero coefficient.
 
We assume some basic familiarity with the basics of computational
algebraic geometry~\cite{Cox+Others/1991/Ideals} and elementary linear
algebra~\cite{Halmos/74/Finite}.

\paragraph{Vector Fields:} A \emph{vector field} $F$ over a manifold
$M \subseteq \reals^n$ is a map $F: M \mapsto \reals^n$ from each $\vx
\in M$ to a vector $F(\vx) \in \reals^n$, wherein $F(\vx) \in
T_M(\vx)$, the tangent space of $M$ at $\vx$.

A vector field $F$ is continuous if the map $F$ is continuous. A
\emph{polynomial vector field} $F \in (\reals[\vx])^n$ is
specified by a tuple $ F(\vx)= \tupleof{p_1(\vx),
  p_2(\vx),\ldots,p_n(\vx)}$, wherein $p_1,\ldots,p_n \in
\reals[\vx]$.  

A system of (coupled) ordinary differential equations (ODE) specifies
the evolution of variables $\vx:(x_1,\ldots,x_n) \in M$ over time $t$:
\[ \diff{x_1}{t} = p_1(x_1,\ldots,x_n),\ \cdots,\ \diff{x_n}{t} = p_n(x_1,\ldots,x_n)\,,\] 
The system implicitly defines a vector field
$F(\vx): \tupleof{p_1(\vx),\ldots,p_n(\vx)}$.  We assume that all
vector fields $F$ considered in this paper are (locally) Lipschitz
continuous over the domain $M$.  In general, all polynomial vector
fields are locally Lipschitz continuous, but not necessarily
\emph{globally} Lipschitz continuous over an unbounded domain $X$.
The Lipschitz continuity of the vector field $F$, ensures that given
$\vx=\vx_0$, there exists a time $T > 0$ and a unique time trajectory
$\tau: [0,T) \mapsto \reals^n$ such that $\tau(t) =
\vx_0$~\cite{Meiss/2007/Differential}.

\begin{definition}
For a vector field $F:\ \tupleof{f_1,\ldots,f_m}$, the \emph{Lie derivative}
of a smooth function $f(\vx)$ is given by
\[\lie_F(f) = (\grad f)\cdot F(\vx)= \mathop{\sum}_{i=1}^n \left( \pdiff{f}{x_i} \cdot f_i\right) \]
\end{definition}

Henceforth, wherever the vector field $F$ is clear from the context,
we will drop subscripts and use $\lie(p)$ to denote the Lie derivative
of $p$ w.r.t  $F$.

\begin{definition}
  A continuous system over variables $x_1,\ldots,x_n$ consists of a
  tuple $\scrS: \tupleof{X_{0}, \F, X_I}$ wherein $X_0 \subseteq
  \reals^n$ is the set of initial states, $\F$ is a vector field over
  the domain represented by a manifold $X_I \subseteq \reals^n$.
\end{definition}

Note that in the context of hybrid systems, the set $X_I$ is often
referred to as the \emph{state invariant} or the \emph{domain}
manifold.




\subsection{Change-of-Bases for Continuous Systems}
In this section, we will present change-of-bases (CoB) transformations
of continuous systems and some of their properties. 

Consider a map $\alpha:\reals^k \mapsto \reals^l$. Given a set $S
\subseteq \reals^k$, let $\alpha(S)$ denote the set obtained by
applying $\alpha$ to all the elements of $S$.  Likewise, the inverse
map over sets is $ \alpha^{-1}(T):\ \{ s\ |\ \alpha(s) \in T \}$.  Let
$\scrS: \tupleof{X_0, \F, X_I}$ be a continuous system over variables
$\vx:\ (x_1,\ldots,x_n)$ and $\scrT: \tupleof{Y_0, \G, Y_I}$ be a
continuous system over variables $\vy: (y_1,\ldots,y_m)$.

\begin{definition}\label{Def:semi-conjugacy}
  We say that $\scrT$  \emph{simulates} $\scrS$ iff there
  exists a smooth mapping $\alpha:\ \reals^n \mapsto \reals^m$ such that
  \begin{enumerate}
    \item $Y_0 \supseteq  \alpha(X_0)$ and  $Y_I \supseteq \alpha(X_I)$.
      \item For any trajectory $\tau: [0,T) \mapsto X_I$\ of\ $\scrS$, 
        $\alpha \circ \tau$ is a trajectory of $\scrT$.
  \end{enumerate}
\end{definition}

A simulation relation implies that any time trajectory of $\scrS$ can
be mapped to a trajectory of $\scrT$ through $\alpha$. However,
since $\alpha$ need not be invertible, the converse
need not hold. I.e, $\scrT$ may exhibit time trajectories that are not
mapped onto by any trajectory in $\scrS$.

Let $\scrS$ and $\scrT$ be defined by Lipschitz continuous vector
fields. The following theorem enables us to check given $\scrS$ and
$\scrT$, if $\scrT$ simulates $\scrS$.

\begin{theorem}\label{Theorem:simulation}
$\scrT$ simulates $\scrS$ if the following conditions hold:
\begin{enumerate}
\item $Y_0 \supseteq  \alpha(X_0)$.
\item  $Y_I \supseteq \alpha(X_I)$.
\item 
$ \G(\alpha(\vec{x})) = J_\alpha.\F(\vec{x})$, wherein, 
$J_{\alpha}$ is the Jacobian matrix
\[ J_{\alpha}(x_1,\ldots,x_n) = \left[ \begin{array}{rcl}
\frac{\partial \alpha_1}{\partial x_1} & \cdots &  \frac{\partial \alpha_1}{\partial x_n} \\
\vdots & \ddots & \vdots \\
\frac{\partial \alpha_m}{\partial x_1} & \cdots & \frac{\partial \alpha_m}{\partial x_n} \\
\end{array}\right] \,,\]
and $\alpha(\vx) = ( \alpha_1(\vx),\cdots,\alpha_m(\vx)),\ \alpha_i:
\reals^n \mapsto \reals$.
\end{enumerate}
\end{theorem}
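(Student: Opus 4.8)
The plan is to verify the two clauses of Definition~\ref{Def:semi-conjugacy} directly, using as the required smooth map the very same $\alpha$ that appears in the hypotheses. Clauses~(1) and~(2) of that definition, namely $Y_0 \supseteq \alpha(X_0)$ and $Y_I \supseteq \alpha(X_I)$, are verbatim the first two conditions of the theorem, so there is nothing to prove for them. All of the work sits in clause~(2) of the definition: that $\alpha$ carries every trajectory of $\scrS$ to a trajectory of $\scrT$.

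So I would fix an arbitrary trajectory $\tau : [0,T) \mapsto X_I$ of $\scrS$ --- meaning $\tau$ is differentiable, $\dot\tau(t) = \F(\tau(t))$, and $\tau(t)\in X_I$ for all $t$ (and in fact smooth, since $\F$ is polynomial) --- and set $\sigma := \alpha\circ\tau$. Two things then need checking. First, $\sigma(t) = \alpha(\tau(t)) \in \alpha(X_I) \subseteq Y_I$ by condition~(2) of the theorem, so $\sigma$ remains in the domain of $\scrT$. Second, because $\alpha$ is smooth I can differentiate by the chain rule,
\[
\dot\sigma(t) \;=\; J_\alpha(\tau(t)) \cdot \dot\tau(t) \;=\; J_\alpha(\tau(t)) \cdot \F(\tau(t)),
\]
and then substitute $\vx = \tau(t)$ into the functional identity of condition~(3), which turns the right-hand side into $\G(\alpha(\tau(t))) = \G(\sigma(t))$. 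Hence $\dot\sigma = \G(\sigma)$ on $[0,T)$ with image contained in $Y_I$, i.e.\ $\sigma = \alpha\circ\tau$ is a trajectory of $\scrT$. That establishes clause~(2) of Definition~\ref{Def:semi-conjugacy} and hence the theorem.

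There is no deep obstacle here; the mathematical content is a single application of the chain rule, and the only place that genuinely requires care is pinning down the definitional conventions. Specifically: (i) it suffices that $\alpha\circ\tau$ \emph{be a} solution of $\scrT$'s ODE lying in $Y_I$ --- it need not be maximal, which is consistent with the paper's own remark that $\scrT$ may possess trajectories outside the image of $\alpha$; (ii) the Lipschitz hypothesis on $\G$ is not actually used in this direction --- what is used is differentiability of $\tau$ (inherited from $\scrS$) and smoothness of $\alpha$, which together license the chain-rule step; (iii) if the operative notion of ``trajectory of $\scrT$'' additionally demands starting in $Y_0$, one appends the one-line remark that $\sigma(0) = \alpha(\tau(0)) \in \alpha(X_0) \subseteq Y_0$ whenever $\tau(0)\in X_0$, which is exactly condition~(1); and (iv) condition~(3) is to be read as an identity holding throughout (a neighbourhood of) the domain $X_I$ on which $\tau$ lives, so that evaluating it along $\tau$ is legitimate. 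A clean write-up should make these four points explicit and otherwise proceed exactly as above.
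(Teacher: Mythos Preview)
Your proposal is correct and follows essentially the same route as the paper: fix a trajectory $\tau$ of $\scrS$, set $\sigma=\alpha\circ\tau$, verify $\sigma$ stays in $Y_I$ (and starts in $Y_0$) by the set-containment hypotheses, and then use the chain rule together with condition~(3) to show $\dot\sigma=\G(\sigma)$. The only notable difference is that the paper closes by invoking Lipschitz continuity of $\G$ to remark that $\sigma$ is \emph{the} unique trajectory from $\alpha(\tau(0))$, whereas you (correctly) observe that uniqueness is not needed for Definition~\ref{Def:semi-conjugacy} and that the Lipschitz hypothesis plays no essential role in this direction.
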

\begin{proof}
  Let $\tau_x$ be a trajectory over $\vx$ for system $\scrS$. Note
  that at any time instant $t \in [0,t)$, $ \frac{d \tau_x}{d t} =
  \F(\tau(t))$.

We wish to show that $\tau_y(t) = \alpha(\tau_x(t))$ is a time
trajectory for the system $\scrT$. Since, $\tau_x(0) \in X_0$, we conclude that $\tau_y(0) =
\alpha(\tau_x(0)) \in Y_0$. Since $\tau_x(t) \in X_I$ for all $t \in
[0,T)$, we have that $\tau_y(t) = \alpha(\tau_x(t))\in Y_I$.
Differentiating $\tau_y$ we get,
\[ \begin{array}{rclclcl}
\frac{d \tau_y}{dt} &=& \frac{d \alpha(\tau_x(t))}{dt} &=& J_{\alpha}\cdot \frac{d \tau_x}{dt} &=& J_{\alpha} \cdot \F(\tau_x(t)) \\
                    &=& \G (\alpha(\tau_x(t)))                     &=& \G(\tau_y(t))\,. \\
\end{array} \]
Therefore $\tau_y = \alpha \circ \tau_x$ conforms to the dynamics of
$\scrT$. By Lipschitz continuity of $\G$, we obtain that $\tau_y$ is
the unique trajectory starting from $\alpha \circ \tau(0)$.
\end{proof}

Theorem~\ref{Theorem:simulation} shows that the condition 
\[ \G(\alpha(\vec{x})) = J_\alpha.\F(\vec{x}) \] relating vector
fields $\F$ and $\G$ suffices to guarantee that time trajectories
(integral curves) of $\F$ are related to those in $\G$ through the map
$\alpha$. In differential geometric terms, this condition can be
stated as $\F$ is $\alpha$-related to
$\G$~\cite{Lee/2003/Introduction}.

 Note that, in general, a
trajectory $\tau_y(t) = \alpha(\tau_x(t))$ may exist for a longer
interval of time than the interval $[0,T)$ over which $\tau_x$ is
assumed to be defined.

\begin{theorem}\label{Thm:weak-preserve-inv}
Let $\scrT$ simulate $\scrS$ through a map $\alpha$.
If $Y \subseteq Y_I$ is a positive invariant set for $\scrT$ then 
$\alpha^{-1}(Y) \cap X_I$ is a positive invariant set for $\scrS$.
\end{theorem}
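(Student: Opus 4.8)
The plan is to unwind the definitions of ``positive invariant set'' and ``simulates'' and then chase trajectories through the map $\alpha$, exactly in the spirit of the proof of Theorem~\ref{Theorem:simulation}. Recall that $Y \subseteq Y_I$ being positive invariant for $\scrT$ means: for every trajectory $\tau_y:[0,T) \mapsto Y_I$ of $\scrT$ with $\tau_y(0) \in Y$, we have $\tau_y(t) \in Y$ for all $t \in [0,T)$. We must show the analogous statement for the set $Z := \alpha^{-1}(Y) \cap X_I$ and the system $\scrS$.

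First I would take an arbitrary trajectory $\tau_x:[0,T) \mapsto X_I$ of $\scrS$ with $\tau_x(0) \in Z$. By definition of $Z$, $\tau_x(0) \in X_I$ and $\alpha(\tau_x(0)) \in Y$. Next, since $\scrT$ simulates $\scrS$ via $\alpha$ (Definition~\ref{Def:semi-conjugacy}), the composition $\tau_y := \alpha \circ \tau_x$ is a trajectory of $\scrT$; moreover, because $\tau_x(t) \in X_I$ for all $t$ and $Y_I \supseteq \alpha(X_I)$, we have $\tau_y(t) \in Y_I$ for all $t \in [0,T)$. Its initial value is $\tau_y(0) = \alpha(\tau_x(0)) \in Y$. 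Now invoke the positive invariance of $Y$ for $\scrT$: this gives $\tau_y(t) = \alpha(\tau_x(t)) \in Y$ for all $t \in [0,T)$, i.e.\ $\tau_x(t) \in \alpha^{-1}(Y)$ for all $t$. Combined with $\tau_x(t) \in X_I$, we conclude $\tau_x(t) \in \alpha^{-1}(Y) \cap X_I = Z$ for all $t \in [0,T)$, which is precisely the claim.

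The argument is essentially a diagram chase, so the only real subtlety — and the step I would be most careful about — is making sure the notion of ``trajectory'' lines up on both sides: a trajectory of $\scrS$ must stay in the domain manifold $X_I$, and we need $\alpha$ to carry it into $Y_I$ so that it counts as a legitimate trajectory of $\scrT$ to which the invariance hypothesis applies. This is exactly what clause~1 of Definition~\ref{Def:semi-conjugacy} ($Y_I \supseteq \alpha(X_I)$) buys us, so no extra hypotheses are needed. A minor point worth a sentence in the writeup is that we only ever need to consider trajectories starting in $Z$, and that the intersection with $X_I$ on the conclusion side is automatic since any trajectory of $\scrS$ lies in $X_I$ by definition; the intersection is stated merely to make $Z$ a subset of the state space of $\scrS$. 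No appeal to Lipschitz continuity or uniqueness of solutions is required here — those were needed in Theorem~\ref{Theorem:simulation} to produce the simulating trajectory, which we are now simply assuming as given.
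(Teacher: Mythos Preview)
Your proof is correct and follows essentially the same approach as the paper's: both use the simulation hypothesis to push a trajectory $\tau_x$ of $\scrS$ through $\alpha$ to obtain a trajectory $\tau_y$ of $\scrT$, and then invoke the positive invariance of $Y$. The only cosmetic difference is that the paper phrases this as a proof by contradiction (assume some $\tau_x(t)\notin \alpha^{-1}(Y)$ and derive a violation of the invariance of $Y$), whereas you give the equivalent direct argument; your added remarks about why $\tau_y$ lands in $Y_I$ and why the intersection with $X_I$ is automatic are accurate and make the writeup slightly more explicit than the paper's.
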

\begin{proof}
  Assuming otherwise, let $\tau_x$ be a time trajectory that starts
  from inside $ \alpha^{-1}(Y) \cap X_I$ and has a time instant $t$
  such that $\tau_x(t) \not\in \alpha^{-1}(Y) \cap X_I$.  Since we
  defined time trajectories so that $\tau_x(t) \in X_I$, it follows
  that $\tau_x(t) \not\in \alpha^{-1}(Y)$.  As a result,
  $\alpha(\tau_x(t)) \not\in Y$. Therefore, corresponding to $\tau_x$,
  we define a new trajectory $\tau_y= \alpha\circ \tau_x$ which
  violates the positive invariance of $Y$. This leads to a contradiction.
\end{proof}

Let $\varphi[\vy]$ be an assertion representing an invariant of the
system $\scrT$ that simulates $\scrS$ through CoB transformation
$\alpha$.  The assertion $\varphi[ \vy \mapsto \alpha(\vx)]$ obtained
by substituting $\alpha(\vx)$ in place of occurrences of $\vy$ is an
invariant for the original system. In other words, inverting the map
$\alpha$ simply boils down to substituting  $\alpha(\vx)$ in the
invariants of the abstract system.
An application of the Theorem above is illustrated in
Example~\ref{Ex:motivating-example}.
\begin{example}\label{Ex:Volterra-3d-inv}
Consider a mechanical system $\scrS$ expressed in generalized
position coordinates $(q_1,q_2)$ and momenta $(p_1,p_2)$ defined
using the following vector field:
\[F(p_1,p_2,q_1,q_2): \left\langle\begin{array}{l} -2q_1q_2^2,\ -2q_1^2q_2,\ 2p_1,\ 2p_2
\end{array}\right\rangle\]
with the initial conditions: $ (p_1,p_2) \in [-1,1] \times
[-1,1]\ \land\ (q_1,q_2): (2,2)$.  Using the transformation
$\alpha(p_1,p_2,q_1,q_2): p_1^2+p_2^2+q_1^2q_2^2$, we see that $\scrS$
is simulated by a linear system $\scrT$ over $y$, with dynamics given
by $\frac{d y}{dt} = 0,\ y(0) \in [16,18]$.

Incidentally, the form of the system $\scrT$ above indicates that
$\alpha$ is an expression for a conserved quantity (in this case, the
Hamiltonian) of the system. \hfill\halmos
\end{example}

The main goal of this work is to study CoB transformations that
``simplify'' the system's dynamics either (a) casting a non-algebraic
vector field into one defined algebraically or (b) reducing the degree
of a given algebraic vector field by means of an abstraction. A
special case consists of \emph{linearizing CoB transformations} that
map a non-linear system to one defined by affine dynamics.

Recall that a system $\scrT$ is algebraic if it is described by a
polynomial vector field. Furthermore, $\scrT$ is \emph{affine} if it
is described by an affine vector field $\frac{d \vy}{dt} = A \vy +
\vb$ for an $m\times m$ matrix $A$ and an $m\times 1$ vector $\vb$.

\begin{definition}
  Let $\scrS$ be a (non-linear) system.  We say that $\alpha$ is an
  \emph{algebraizing CoB transformation} if it maps $\scrS$ to an
  algebraic system $\scrT$.

  We say that $\alpha$ is a \emph{linearizing CoB transformation} if
  it maps each trajectory of $\scrS$ to that of an affine system
  $\scrT$.
\end{definition}

\begin{example}
Consider the vector field $\F$
\[ \frac{dx}{dt} = x^3 - 2 x^2 + y^2 + xy ,\ \frac{dy}{dt} = 2 x - 3 x^2 + 2 y^3 \,.\]
Let $\alpha:(x,y) \rightarrow (w_1,w_2,w_3,w_4)$ be defined as 
\[\alpha(x,y): (x,y,x^2,y^2) \]

We can verify that using $\alpha$, we note that $\F$ is simulated by
the vector field $\G$:
\[ \begin{array}{ll}
\frac{dw_1}{dt} = w_1 w_3 - 2 w_3 + w_4 + w_1 w_2, & \frac{dw_2}{dt} = 2 w_1 - 3 w_3 + 2 w_2 w_4 \\
\frac{dw_3}{dt} = -4 w_1 w_3 + 2 w_3^2 + 2 w_2 w_3 + 2 w_1 w_4, & \frac{dw_4}{dt} = 4 w_1 w_2 - 6 w_2 w_3 + 4 w_4^2 \\
\end{array}\]
Note that while $\F$ is a cubic vector field over $\reals^2$, 
$\G$ is a quadratic vector field over $\reals^4$. \halmos
\end{example}

Example~\ref{Ex:motivating-example} illustrates a linearizing CoB
transformation.

The above definition of an algebraizing or linearizing CoB seems
useful, in practice, only if $\alpha$ and $\scrT$ are already
known. We may then use known techniques for reasoning over algebraic
systems or affine systems for safely bounding the reachable set of an
affine system, given some initial conditions, and transform the result
back through substitution to obtain a bound on the reachable set for
$\scrS$.

We now present a technique that searches for a map $\alpha$ to obtain
an algebraic system $\T$ that simulates a given system $\scrS$ through
$\alpha$ such that the vector field describing $\T$ is degree bounded
by a given  degree limit $d > 0$. In particular, if the degree limit
$d$ is set to $1$, then the resulting transformation $\alpha$ is
linearizing.

We ignore the initial condition and invariant, for the time being, and
simply focus on obtaining the dynamics of $\T$. In other words, we
will search for a map ${\alpha}:\ (\alpha_1,\ldots,\alpha_m)$ that
maps $\reals^n$ into $\reals^m$ so that
\[ J_{\alpha}(\vx) \cdot \F(\vx) = G(\alpha(\vx)) \,.\] 

Having found such a map, we may find appropriate over-approximate
initial and invariance conditions for the simulating system $\scrT$,
so that Definition~\ref{Def:semi-conjugacy} holds. Specifically, we
are interested in finding transformations $\alpha$ that ensure that
(a) $G$ is a polynomial vector field and (b) the degrees of
polynomials describing $G$ are degree bounded by the degree limit $d >
0$.

\subsection{Multilinear Abstractions through Dimension Copying}

We first show that any polynomial system of ODEs can be abstracted by
a \emph{multilinear} system. However, doing so may require $\alpha$ to
have many repeated components wherein $\alpha_i(\vx) = \alpha_j(\vx)$
for $i \not= j$.

\begin{definition}
A polynomial $p$ is defined to be \emph{multilinear} if and only if
each power-product in $p$ is of the form $x_1^{r_1} x_2^{r_2} \cdots x_n^{r_n}$
wherein each $r_i = 0\ \mbox{or}\ 1$. 
\end{definition}

\begin{example}
  As an example, the polynomial $p = 2 x_1 x_2 x_3 + x_1 x_3 + 4 x_1 -
  2 x_2 - 1$ is multilinear. On the other hand, the polynomial $q =
  2x_2^2 + x_1 + x_3$ is not, owing to the $x_2^2$ power product.
\end{example}

We first observe that any polynomial ODE may be equivalently written
by means of a multilinear system using a suitably defined $\alpha$.

\begin{theorem}\label{Theorem:transformation-multi-linear}
  Let $\F$ be a polynomial vector field over $\vx \in \reals^n$. There
  is a transformation $\alpha:\reals^n \rightarrow \reals^m$, that
  maps $\F$ to a multilinear system $\G$.
\end{theorem}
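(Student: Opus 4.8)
The plan is to build the transformation $\alpha$ by \emph{adjoining new coordinates for the power-products that appear anywhere in the system}, including recursively in the Lie derivatives, and then showing that this process terminates while leaving us with a multilinear vector field. Concretely, I would start from the original coordinates $x_1,\ldots,x_n$ (each of which is trivially a multilinear power-product of degree $1$) and maintain a finite set $P$ of power-products, initialized to $\{x_1,\ldots,x_n\}$ together with every power-product occurring in $p_1,\ldots,p_n$. For each power-product $m \in P$ I compute its Lie derivative $\lie_\F(m) = \sum_i \frac{\partial m}{\partial x_i} p_i$, which is again a polynomial; I then add to $P$ every power-product occurring in $\lie_\F(m)$. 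Iterating this closure step, $P$ stabilizes (the obstacle below) to a finite set $\{m_1,\ldots,m_k\}$, and I would set $m = k$ with $\alpha = (m_1,\ldots,m_k)$. By construction $\lie_\F(m_j)$ is a polynomial in the $m_i$'s, i.e. there is a polynomial vector field $\G$ with $\G(\alpha(\vx)) = J_\alpha(\vx)\cdot\F(\vx)$, so by Theorem~\ref{Theorem:simulation} $\G$ simulates $\F$ via $\alpha$.

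The remaining point is to make $\G$ \emph{multilinear}, not merely polynomial, and this is where the "dimension copying" of the subsection title enters. A power-product $x_1^{r_1}\cdots x_n^{r_n}$ of total degree $D$ can be written as a product of $D$ degree-one factors $x_{i_1}\cdots x_{i_D}$ with repetitions; I would introduce $D$ distinct fresh variables $z_1,\ldots,z_D$, one per factor, all initialized to the same value $x_{i_j}$, so that the monomial becomes the multilinear product $z_1 z_2 \cdots z_D$. Applying this device to every element of $P$, each original power-product becomes a genuinely multilinear monomial in the enlarged variable set $\vec z$, and the components of $\alpha$ that were equal power-products of the same variable simply get replicated — this is exactly why $\alpha$ may have repeated components $\alpha_i = \alpha_j$. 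One then checks that the induced dynamics on $\vec z$ are still polynomial (each $\dot z_j$ is obtained from $\lie_\F$ of the corresponding single variable re-expressed in the $\vec z$), and that after the substitution every right-hand side is multilinear, because whenever a squared factor $z_j^2$ would appear we have a second copy $z_{j'}$ carrying the same value and we rewrite $z_j^2$ as $z_j z_{j'}$. Carrying enough copies — as many as the largest degree encountered in the closure — guarantees this rewriting always succeeds.

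The main obstacle is \textbf{termination of the closure step}: a priori, repeatedly taking Lie derivatives of power-products of a polynomial vector field could generate power-products of unbounded degree, in which case $P$ would be infinite and $\alpha$ would not exist as a finite map. In general this \emph{does} happen (that is precisely why arbitrary polynomial ODEs are not finitely Carleman-linearizable), so the theorem cannot be literally true for the naive closure. The way I would resolve this — and I expect this is the intended argument — is to bound the degree a priori: if $\F$ has degree $\delta$, then $\lie_\F$ raises the degree of a monomial by at most $\delta - 1$, but crucially, once we move to the multilinear encoding, I would instead argue by \emph{fixing the target dimension first}. That is, choose a degree bound $d$ large enough to capture all monomials of $\F$, let $P$ be the (finite) set of \emph{all} power-products over $\vx$ of degree $\le d$, and observe that $\lie_\F(m)$ for $m$ of degree $\le d$ need not stay within degree $d$ — so this still fails unless we are more careful.

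The correct fix, and the one I would actually write, is to \emph{not close under $\lie_\F$ at the monomial level at all}, but to exploit that a degree-$\delta$ polynomial vector field in $n$ variables is itself a multilinear vector field in $\delta n$ "copied" variables. Introduce $\delta$ copies $x_i^{(1)},\ldots,x_i^{(\delta)}$ of each $x_i$, all initialized to $x_i$, with dynamics $\dot{x}_i^{(r)} = p_i(\vec{x}^{(\ast)})$ where each monomial of $p_i$ of degree $e \le \delta$ is rewritten using $e$ distinct copies; then every right-hand side is multilinear by construction, the map $\alpha(\vx) = (x_1,\ldots,x_1,x_2,\ldots,x_n,\ldots,x_n)$ (each repeated $\delta$ times) is smooth, and one verifies $J_\alpha \cdot \F = \G \circ \alpha$ directly since on the diagonal $\{x_i^{(1)} = \cdots = x_i^{(\delta)}\}$ the copied system reproduces $\F$. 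Theorem~\ref{Theorem:simulation} then gives the simulation. So the real content — and the step to get right — is the bookkeeping showing that $\delta$ copies suffice to make every monomial of degree $\le \delta$ multilinear, and that the diagonal is invariant so that $\alpha\circ\tau$ solves $\G$.
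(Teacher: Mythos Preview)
Your final construction --- introducing $\delta = \max_i \deg(p_i)$ copies $x_i^{(1)},\ldots,x_i^{(\delta)}$ of each variable, setting $\alpha(\vx)=(x_1,\ldots,x_1,\ldots,x_n,\ldots,x_n)$, and rewriting each monomial of $p_i$ as a product of distinct copies --- is exactly the paper's proof, including the dimension count $m \le n\cdot K$ with $K=\max_i \deg(p_i)$ and the observation that $\G\circ\alpha = J_\alpha\cdot\F$ holds because the copied monomials collapse back to the originals on the diagonal.

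The long detour through ``close $P$ under Lie derivatives'' is unnecessary and, as you yourself diagnose, does not terminate in general; the paper never attempts it. The theorem asks only for \emph{some} multilinear $\G$ simulating $\F$, not for a closed finite set of monomials whose Lie derivatives stay in their span --- that stronger requirement is precisely the Carleman-linearization obstruction you ran into. So your proposal is correct, but only because its last paragraph discards everything before it; a clean write-up should start directly from the dimension-copying construction.
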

\begin{proof}
  Let us write $\F(\vx): (p_1,\ldots,p_n)$ for multivariate polynomials
  $p_1,\ldots,p_n$.  We will assume that the vector field $\F$ is not
  already multi-linear. Therefore, some $p_j$ has a power product that
  is divisible $x_k^r$ for some $r \geq 2$. The idea is to use $r$
  different functions $\alpha_{k,1} = \alpha_{k,2} = \cdots =
  \alpha_{k,r} = x_k$ so that in the transformed system the term
  $x_k^r$ appears as a multilinear product $y_{k,1} y_{k,2} \cdots
  y_{k,r}$.

  In the worst case, the transformation $\alpha$ involves $n \times K$
  components, wherein \[
K=
  \max(\mathsf{degree}(p_1),\ldots,\mathsf{degree}(p_n))\,.\]  Each
  component $\alpha_{i,k}: x_i$ is simply a ``copy'' of the variable
  $x_i$ that ensures multilinearity of the transformed system.
\end{proof}

\begin{example}
Consider the one dimensional system defined by 
\[ \frac{dx}{dt} = 2 x^5 + 3 x^2 + x -5 \,.\]
We use the transformation $\alpha: \reals \rightarrow \reals^5$ wherein
$\alpha_1(x) = \alpha_2 (x) = \cdots = \alpha_5(x) = x$.  Using this transformation, we derive an abstract system defined by the ODE
\[ \frac{d y_j}{dt} = 2 y_1 y_2 y_3 y_4 y_5  + 3 y_1 y_2 + y_1 - 5 \,,\ j = 1,2,\ldots,5. \]
\halmos
\end{example}
Even though there are efficient algorithms for analyzing multi-linear
systems~\cite{Berman+Halasz+Kumar/2007/MARCO}, the transformation in
Theorem~\ref{Theorem:transformation-multi-linear} faces two potential
problems: (a) the dimensionality of the transformed system $\scrT$ can
be as large as the dimensionality of the original system times the
maximum degree of the polynomials in the RHS of the vector field, and
(b) ignoring the implicit equality relationships between the various
dimensions results in a very coarse abstraction while taking them into
account simply gives us the original system back (albeit in a
different form).

\subsection{Independent Transformations}
The rest of this paper, will focus on \emph{independent
  transformations} $\alpha: (\alpha_1,\ldots,\alpha_N)$ wherein each
$\alpha_i$ cannot be written as a linear combination of the remaining
$\alpha_j$s for $ j \not= i$. Assuming independence automatically rules
out the constructions used in
Theorem~\ref{Theorem:transformation-multi-linear}.

In general, computing independent transformations $\alpha$ for any
given ODE is a hard problem. In this paper, we will focus on solutions
that involve searching for an appropriate map $\alpha$, wherein
$\alpha$ is specified to be the linear combination of some fixed,
finite set of basis functions $g_1,\ldots,g_N$. The initial basis is
assumed to be given to our algorithm by the user. Starting from this
initial basis of functions, our algorithm searches for transformations
$\alpha$ whose components can be written as linear combinations
$\sum_{i=1}^N \lambda_j g_j$.

The basis functions could be specified implicitly as the set of all
power products over $\vx$ of degree up to some limit $K > 0$ or the set
of all power products involving the variables $x_i$ and various
non-algebraic functions $\sin(z), \cos(z)$ and $e^{z}$ applied to
these power products.  Having chosen a basis $B=\{g_1,\ldots,g_N\}$
for $\alpha$, we will cast the search for the map $\alpha$ as a vector
space iteration.

Let $\alpha(\vx): ( \alpha_1(\vx),\ldots,\alpha_m(\vx))$ be a
smooth mapping $\alpha:\reals^n \mapsto \reals^m$, wherein each
$\alpha_i: \reals^n \mapsto \reals$.  Recall that
$\lie_F(\alpha_i(\vx))= (\grad \alpha_i) \cdot \F(\vx)$ denotes the Lie
derivative of the function $\alpha_i(\vx)$ w.r.t vector field $\F$.

\begin{lemma}\label{Lemma:useful-1}
$J_{\alpha} \cdot \F(\vx) = \left(\begin{array}{c}
 \lie_F(\alpha_1(\vx)) \\ 
 \lie_F(\alpha_2(\vx)) \\ 
 \vdots \\
 \lie_F(\alpha_m(\vx)) \\
\end{array}\right) $.
\end{lemma}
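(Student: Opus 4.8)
The plan is to prove Lemma~\ref{Lemma:useful-1} by unwinding the definitions of the Jacobian $J_\alpha$ and of the Lie derivative, and observing that the two sides agree entry by entry. This is essentially a bookkeeping exercise: the $i$-th row of $J_\alpha$ is the gradient $\grad \alpha_i = \left(\pdiff{\alpha_i}{x_1},\ldots,\pdiff{\alpha_i}{x_n}\right)$, and multiplying this row by the column vector $\F(\vx) = \tupleof{f_1(\vx),\ldots,f_n(\vx)}$ yields exactly $\sum_{j=1}^n \pdiff{\alpha_i}{x_j} f_j$, which is the definition of $\lie_F(\alpha_i(\vx))$.

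Concretely, I would proceed as follows. First, write out $J_\alpha \cdot \F(\vx)$ as a column vector whose $i$-th component is the dot product of the $i$-th row of $J_\alpha$ with $\F(\vx)$; this uses only the definition of matrix-vector multiplication together with the explicit form of $J_\alpha$ given in the statement of Theorem~\ref{Theorem:simulation}. Second, identify the $i$-th row of $J_\alpha$ as $\grad \alpha_i$. Third, invoke the definition of the Lie derivative, $\lie_F(\alpha_i) = (\grad \alpha_i)\cdot \F(\vx) = \sum_{j=1}^n \pdiff{\alpha_i}{x_j} f_j$, to conclude that the $i$-th component of $J_\alpha\cdot\F(\vx)$ equals $\lie_F(\alpha_i(\vx))$. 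Since this holds for every $i \in \{1,\ldots,m\}$, the two column vectors are equal, which is the claim.

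I do not anticipate a genuine obstacle here: the lemma is a definitional restatement whose only subtlety is ensuring that the indexing of rows and the indexing of the $\alpha_i$ line up, and that the vector field's components $f_i$ are contracted against the correct partial derivatives. The one point worth stating carefully is smoothness of each $\alpha_i$, so that the partial derivatives and hence the Jacobian are well-defined; this is already assumed in the setup ($\alpha$ is a smooth mapping). Thus the proof is a short chain of equalities rather than an argument requiring any real idea, and its role is simply to record the identification of $J_\alpha\cdot\F$ with the vector of Lie derivatives for use in the subsequent development.
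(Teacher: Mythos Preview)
Your proposal is correct and essentially identical to the paper's own proof: both write the rows of $J_\alpha$ as the gradients $\grad\alpha_i$ and then observe that the $i$-th entry of $J_\alpha\cdot\F(\vx)$ is $(\grad\alpha_i)\cdot\F(\vx) = \lie_F(\alpha_i(\vx))$.
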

\begin{proof}
Recall the definition of the Jacobian matrix $J_{\alpha}$:
\[ J_{\alpha}(x_1,\ldots,x_n) = \left[ \begin{array}{rcl}
\frac{\partial y_1}{\partial x_1} & \cdots &  \frac{\partial y_1}{\partial x_n} \\
\vdots & \ddots & \vdots \\
\frac{\partial y_m}{\partial x_1} & \cdots & \frac{\partial y_m}{\partial x_n} \\
\end{array}\right] =  \left[ \begin{array}{c}
\grad \alpha_1\\
\vdots \\
\grad \alpha_m \\
\end{array}\right]\,.\]
Therefore, 
$ J_\alpha.\F = \left( \begin{array}{c}
(\grad \alpha_1)\cdot (\F) \\
(\grad \alpha_2)\cdot (\F) \\
\vdots \\
(\grad \alpha_m)\cdot (\F) \\
\end{array}\right) = \left(\begin{array}{c}
 \lie_F(\alpha_1(\vx)) \\ 
 \lie_F(\alpha_2(\vx)) \\ 
 \vdots \\
 \lie_F(\alpha_m(\vx)) \\
\end{array}\right)$.
\end{proof}

\paragraph{Note:} For the rest of this section, we will fix a vector field $\F$
belonging to a system $\scrS$ as the original system for which we seek
an abstraction. We will simply write $\lie(g)$ to denote the
Lie-derivative of a given function $g$ in place of $\lie_F(g)$.

\subsection{Vector Space Closure}
We first define the vector spaces that will be used in our search.
\begin{definition}
  Let $B = \{g_1,\ldots,g_k\}$ be some finite set of functions wherein
  $g_i: \reals^n \rightarrow \reals^m$ for some fixed $n,m > 0$. The
  \emph{vector space} spanned by $G$ denoted $\vspan(B)$ consists of
  all functions that are linear combinations of $g_i$:
  \[ \vspan(B) = \left\{ \sum_{i=1}^k \lambda_i g_i \ |\ \lambda_i \in
    \reals \right\} \,.\] 

  We assume, without loss of generality, that the elements in $B$ are
  linearly independent.  I.e., no $g_i \in B$ can be written as a
  linear combination of the remaining $g_j \in B$, for $j \not= i$.
\end{definition}

Let $\mathbf{1}$ represent the constant function $\mathbf{1}(\vx) = \vec{1} \in
\reals^m$.  Given a vector space $V = \vspan(B)$, we define the space
of power products of $V$ up to a degree limit $d \geq 1$ as
\[ \pp{V}{d} = \vspan\left(\left\{ g_{i_1} \times g_{i_2} \times \cdots \times g_{i_d}\ |\ g_{i_1},\ldots,g_{i_d} \in B \cup \{ \mathbf{1} \} \right\}\right) \,. \]
In particular, note that $\pp{V}{1} = \vspan( V \cup \{ \mathbf{1} \} )$.

\begin{example}
  Let $B = \{ x, \sin(y) \}$ be our basis set. The vector space
  $V:\ \vspan(B)$ is given by  $\{ a_1 x + a_2 \sin(y)\ |\ a_1,a_2 \in \reals \}$.
  The space $\pp{V}{2}$ is the set
\[ \left\{ a_0 + a_1 x + a_2 \sin(y) + a_3 x \sin(y) + a_4 x^2 + a_5 \sin^2(y)\ |\ a_0,\ldots,a_5 \in \reals \right\} \,.\]
This space is generated by the functions $ \mathbf{1}, x, \sin(y),
x\sin(y), x^2, \sin^2(y) $.  It consists of all polynomials of degree
at most $2$ formed by the functions $x$, $\sin(y)$. The purpose of
adding the function $\mathbf{1}$ is to enable terms of degree $1$ and
$0$ to be considered. \hfill \halmos
\end{example}

Roughly, the main idea behind our approach is to find a vector space
$U$ that satisfies the following closure property:
\[ (\forall\ f \in U)\ \lie(f) \in \pp{U}{d} \,.\] 

In other words, we will search for a vector space $U$, such that
taking the Lie derivative of any element of $U$ yields an element in
$\pp{U}{d}$. Such a vector space $U$ will be called
$d-\mbox{closed}$. Let $U = \vspan\left(\left\{
h_1,\ldots,h_m\right\}\right)$ be a $d-\mbox{closed}$ vector space. We
will prove that $\alpha:\ (h_1,\ldots,h_m)$ maps the original system
$\scrS$ to an algebraic system $\scrT$ with a vector field of degree
at most $d$.

\begin{definition}~\label{Def:d-closed-vector-space}
A vector space $V$ is said to be $d-\mbox{closed}$ under the application of Lie derivatives iff
$ (\forall\ f \in V)\ \lie(f) \in \pp{V}{d}$.
\end{definition}

In order to check whether a given space $V=\vspan(B)$ is
$d-\mbox{closed}$, it suffices to verify the property in
Definition~\ref{Def:d-closed-vector-space} for the elements in $B$.
\begin{lemma}
  A vector space $U = \vspan\left( \left\{ h_1,\ldots,h_m\right\}\right)$ be
    $d-\mbox{closed}$ under Lie derivatives if and only if $\lie(h_i) \in
    \pp{U}{d}$ for $i \in \{1,\ldots,m\}$.
\end{lemma}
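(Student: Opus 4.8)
The plan is to prove the two implications separately, with the forward direction being immediate and the reverse direction resting on exactly two facts: the linearity of the Lie derivative operator, and the fact that $\pp{U}{d}$ is itself a vector space (being defined as a $\vspan$).

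For the forward implication, suppose $U$ is $d$-closed, i.e. $\lie(f) \in \pp{U}{d}$ for every $f \in U$. Since each generator $h_i$ lies in $U = \vspan(\{h_1,\ldots,h_m\})$, we get $\lie(h_i) \in \pp{U}{d}$ for every $i \in \{1,\ldots,m\}$ directly. There is nothing more to do here.

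For the reverse implication, assume $\lie(h_i) \in \pp{U}{d}$ for each $i$, and let $f \in U$ be arbitrary. Then $f = \sum_{i=1}^m \lambda_i h_i$ for some scalars $\lambda_i \in \reals$. The first key step is to observe that $\lie$ is $\reals$-linear: from the definition $\lie_F(g) = (\grad g) \cdot \F(\vx)$, linearity of the gradient and bilinearity of the dot product give $\lie\!\left(\sum_i \lambda_i h_i\right) = \sum_{i=1}^m \lambda_i \, \lie(h_i)$. The second key step is that $\pp{U}{d}$, being defined as the span of the degree-$\le d$ power products of a basis of $U$, is closed under $\reals$-linear combinations; hence the finite sum $\sum_{i=1}^m \lambda_i\,\lie(h_i)$ of elements of $\pp{U}{d}$ again belongs to $\pp{U}{d}$. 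Therefore $\lie(f) \in \pp{U}{d}$, and since $f \in U$ was arbitrary, $U$ is $d$-closed.

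I do not anticipate a genuine obstacle: the only point that warrants an explicit sentence is the linearity of the Lie derivative, which follows from its defining formula, and this is the ingredient that lets us reduce checking the closure condition on all of $U$ to checking it on a finite generating set. This reduction is precisely what makes the closure property algorithmically testable, which is why the lemma is stated.
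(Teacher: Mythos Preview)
Your proof is correct and follows essentially the same approach as the paper: both directions are handled identically, with the forward direction immediate and the reverse direction using linearity of the Lie derivative together with the fact that $\pp{U}{d}$ is a vector space. Your version is slightly more explicit about \emph{why} $\lie$ is linear and why $\pp{U}{d}$ is closed under linear combinations, but the argument is the same.
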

\begin{proof}
  If $U$ is $d-\mbox{closed}$ under Lie derivatives then by definition, the
  Lie derivatives of its basis elements $h_i$ should lie in
  $\pp{U}{d}$. We will prove the reverse direction. Let $U$ be such
  that for each basis element $h_i$, we have $\lie(h_i) \in
  \pp{U}{d}$. Any element of $U$ can be written as $f = \sum_{j=1}^k
  a_j h_j$ for $a_j \in \reals$. We have $\lie(f) = \sum_{j=1}^k a_j
  \lie(h_j)$. Since each $\lie(h_j) \in \pp{U}{d}$, we have that
  $\lie(f) \in \pp{U}{d}$. This completes the proof.
\end{proof}

Next, we relate $d-\mbox{closed}$ vector spaces to algebraizing CoB
  transformations.  Let $B=\left\{ h_1,\ldots,h_m\right\}$ and $U
  = \vspan\left( B\right)$ be a $d-\mbox{closed}$ vector space. Let
  $\alpha$ be the map from $\reals^n \rightarrow \reals^m$ defined as
  $\alpha: (h_1,\ldots,h_m)$.
\begin{theorem}\label{Thm:d-closed-thm}
  The map $\alpha$ formed by the basis elements of a $d-$closed vector
  field is an algebraizing transformation from the original system
  $\scrS$ to a system $\scrT$ defined by a polynomial vector field of
  degree at most $d$.
\end{theorem}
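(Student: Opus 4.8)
The plan is to use Theorem~\ref{Theorem:simulation} as the workhorse: to show $\alpha = (h_1,\ldots,h_m)$ is an algebraizing transformation, I need only exhibit a polynomial vector field $\G$ of degree at most $d$ satisfying $\G(\alpha(\vx)) = J_\alpha \cdot \F(\vx)$, and then invoke Theorem~\ref{Theorem:simulation} to conclude that the resulting system $\scrT$ (with suitably over-approximated $Y_0 \supseteq \alpha(X_0)$ and $Y_I \supseteq \alpha(X_I)$) simulates $\scrS$. So the entire content reduces to constructing $\G$ and verifying it is polynomial of the right degree.

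First I would invoke Lemma~\ref{Lemma:useful-1} to rewrite the target condition: $J_\alpha \cdot \F(\vx)$ is the column vector whose $i$-th entry is $\lie(h_i)$. Since $U = \vspan(\{h_1,\ldots,h_m\})$ is $d$-closed, by definition each $\lie(h_i) \in \pp{U}{d}$, so there exist real coefficients expressing $\lie(h_i)$ as a linear combination of products $h_{j_1} \cdots h_{j_\ell}$ with $\ell \le d$ and each factor drawn from $\{h_1,\ldots,h_m\} \cup \{\mathbf{1}\}$. The key move is to read such a product $h_{j_1}(\vx) \cdots h_{j_\ell}(\vx)$ as the monomial $y_{j_1} \cdots y_{j_\ell}$ evaluated at $\vy = \alpha(\vx)$, with $\mathbf{1}$ contributing the constant $1$. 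Collecting these, define $g_i(\vy)$ to be the degree-$\le d$ polynomial in $\reals[\vy]$ obtained by this substitution, so that $g_i(\alpha(\vx)) = \lie(h_i)$ holds identically on $\reals^n$. Setting $\G = \tupleof{g_1,\ldots,g_m}$ then gives exactly $\G(\alpha(\vx)) = J_\alpha \cdot \F(\vx)$, and $\G$ is a polynomial vector field of degree at most $d$ by construction.

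With $\G$ in hand, I would choose $Y_0$ to be any set (e.g.\ a bounding box or polyhedron) containing $\alpha(X_0)$ and $Y_I$ any manifold containing $\alpha(X_I)$, verify that $\G$ is (locally) Lipschitz on $Y_I$ as a polynomial vector field, and apply Theorem~\ref{Theorem:simulation} directly: conditions (1) and (2) hold by the choice of $Y_0, Y_I$, and condition (3) is precisely the identity just established. Hence $\scrT : \tupleof{Y_0, \G, Y_I}$ simulates $\scrS$, and since $\scrT$ is algebraic with vector field of degree $\le d$, the map $\alpha$ is an algebraizing CoB transformation of the claimed form.

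The only subtlety — and the step I would be most careful about — is the well-definedness of the substitution: a function in $\pp{U}{d}$ may admit several representations as a linear combination of power-products of the $h_i$, and different representations could in principle yield different polynomials $g_i(\vy)$. This does not matter for the proof, since we only need \emph{some} valid $\G$, and fixing one representation of each $\lie(h_i)$ suffices; but it is worth noting explicitly that $\G$ is not canonical. If one wanted canonicity one would need the $h_i$ (and the relevant power-products) to be algebraically independent, which is not assumed here — only linear independence is assumed. I would flag this in a remark rather than let it obstruct the argument. A second minor point is the degree bookkeeping: $\mathbf{1}$ is included among the allowed factors precisely so that lower-degree and constant terms are captured, so every $g_i$ genuinely has degree $\le d$ with no hidden overflow.
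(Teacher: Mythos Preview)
Your proposal is correct and follows essentially the same approach as the paper: both arguments use the $d$-closure hypothesis to express each $\lie(h_i)$ as a polynomial of degree at most $d$ in the basis functions, substitute fresh variables $y_j$ for the $h_j$ to define $\G$, invoke Lemma~\ref{Lemma:useful-1} to obtain $\G(\alpha(\vx)) = J_\alpha \cdot \F(\vx)$, and then appeal to Theorem~\ref{Theorem:simulation} with suitably over-approximated $Y_0, Y_I$. Your additional remarks on the non-canonicity of $\G$ and the degree bookkeeping via $\mathbf{1}$ are accurate and useful clarifications that the paper does not spell out, but they do not change the underlying argument.
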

\begin{proof}
Since $U$ is $d-\mbox{closed}$, we note that for each $h_i$ in the basis of $U$, we have $ \lie(h_i) \in \pp{U}{d}$.
In other words, we may write $\lie(h_i)$ as a linear combination of power products as shown below:
\begin{equation}\label{Eq:1}
 \lie(h_i):\  \sum_{j=1}^K a_{ij} h_{i,j,1}\times h_{i,j,2} \times \cdots \times h_{i,j,d}\,,\ \mbox{wherein}\ h_{i,j,k} \in B \cup \{ \mathbf{1}\} 
\end{equation}
We define the system $\scrT$ over variables $y_1,\ldots,y_m$.  We will
use variable $y_i$ to correspond to $h_i(\vx)$. The dynamics are obtained as 
\[ \frac{dy_i}{dt} = \sum_{j=1}^K a_{ij} y_{i_1}\times y_{i_2} \times
\cdots \times y_{i_k} \,,\] 
by substituting the variable $y_j$
wherever the function $h_j$ occurs in Equation~\eqref{Eq:1}.  Let $G$
be the resulting vector field on $\vy$. It is easy to see that (a) $G$
is a polynomial vector field and (b) of degree at most $d$.

From Lemma~\ref{Lemma:useful-1}, we note that $J_{\alpha} \F(\vx) =
(\lie(h_1),\ldots, \lie(h_m))$.  We verify that
$(\lie(h_1),\ldots,\lie(h_m)) = G(h_1(\vx),\ldots,h_m(\vx))$. This is
directly evident from the construction of $G$ from
Equation~\eqref{Eq:1}. Thus, the key condition (3) of
Theorem~\ref{Theorem:simulation} is seen to hold. By finding the right
sets $Y_0, Y_I$ given $\alpha$, we take care of the remaining
conditions as well.
\end{proof}

\paragraph{Note:} The trivial space $V = \vspan(\{0\})$ consisting  of the constant function that maps all inputs to $\vec{0}$ is always $d-$closed. This space yields $\alpha: ( 0 )$ that maps all states $\vx$ to the zero vector. As such,
 the map $\alpha$ is not very useful in practice for inferring
 invariants.

\begin{example}\label{Ex:motivating-example-abstraction}
Consider the ODE from Example~\ref{Ex:motivating-example} recalled
below:
\[ \begin{array}{rcl}
\frac{dx}{dt} &=& x y + 2x \\
\frac{dy}{dt} &=& -\frac{1}{2} y^2 + 7 y + 1 \\
\end{array}\]
We claim that the vector space $V$ generated by the set
of functions $ \{  x, xy, xy^2 \} $
is $1-$closed.
To verify, we compute the Lie derivative of a function of the form
$ c_1 x  + c_2 xy + c_3 xy^2$
to obtain
\[ c_1 (xy + 2x) + c_2 (\frac{1}{2} xy^2 + 9 x y + x) + c_3 ( 16 xy^2 + 2 xy) \]
 which is seen to belong to $\pp{V}{1}$. As a result, we obtain the
 CoB abstraction $ \alpha(x,y): ( x, xy, xy^2)$ that maps the vector
 field to an affine vector field (polynomial of degree $1$).

 The abstract system over $(w_1,w_2, w_3) \in \reals^3$ has dynamics
 given by
\[ \begin{array}{rcl}
\frac{dw_1}{dt} &=& 2 w_1 + w_2 \\
\frac{dw_2}{dt} &=& \frac{1}{2} w_3 + 9 w_2 + w_1 \\
\frac{dw_3}{dt} &=& 16 w_3 + 2 w_2 \\
\end{array}\]

The mapping between original and abstract system is given by 
\[ w_1\ \mapsto\ x,\ w_2\ \mapsto\ xy,\ w_3\ \mapsto\ xy^2 \,.\]

\hfill\halmos
\end{example}

\subsection{Finding Closed Vector Spaces}
We will now describe a search technique for finding a map $\alpha$
and the associated abstraction $\scrT$, such that the dynamics of
$\scrT$ are described by polynomials with degree bound $d$. If $d=1$,
the dynamics of $\scrT$ are affine. The inputs to our search procedure
are
\begin{enumerate}
\item The original system $\scrS$ described by a vector field $\F$,
\item The degree limit $d$ for the desired vector field $\scrT$, and
\item An initial basis $B_0 = \{ h_1,\ldots,h_N\}$ of continuous and
  differentiable functions. We may regard the linear combination \[
  c_1 h_1(\vx) + c_2 h_2(\vx) + \ldots + c_N h_N(\vx) \,,\] as
  an \emph{ansatz} or a template for each component $\alpha_j$ of the
  map $\alpha:(\alpha_1,\ldots,\alpha_m)$, that we are searching
  for. However, we do not fix the number of components $m$ of the
  transformation $\alpha$, \emph{apriori}, or guarantee that a
  non-trivial $\alpha$ (with $m > 0$) can be found.
\end{enumerate}

The initial basis $B_0$ is often specified as consisting of all power
products of the variables in $\vx$ with a given degree limit $M$. This
limit $M$ is chosen independent of the limit $d$ for the desired
abstraction $\scrT$.

Our overall approach is to start with the initial vector space $V_0:
\vspan(B_0)$ and iteratively refine $V_0$ to construct a sequence of
vector spaces 
\[ V_0 \supseteq V_1 \supseteq V_2 \cdots \supseteq V_k = V_{k+1}=
V^*\] wherein, (1) $V_{j+1} \subseteq V_j$, for $j \in [1,k-1]$, and
(2) $V_{k} = V_{k+1}$. The iterative scheme is designed to guarantee
that the converged result $V^*$ is $d-$ closed. If $V^*$ has a
non-zero basis, then the basis elements of $V^*$ form the components
of the map $\alpha$ and the abstraction $\scrT$ whose dynamics have
the desired form.

The main step of iteration is to derive $V_{i+1}$ from $V_i$. This is performed as follows:
\begin{equation}~\label{Eq:refineV}
 V_{i+1} = \{ g \in V_i\ |\ \lie(g) \in \pp{V_i}{d} \} \,.
\end{equation}
In other words, $V_{i+1}$ retains those functions $g \in V_i$ whose
Lie derivatives also lie inside $\pp{V_i}{d}$.

\begin{lemma}
  (1) $V_{i+1}$ is a sub-space of $V_i$.  (2) $V_{i}$ is $d-$closed
  iff $V_{i} = V_{i+1}$.
\end{lemma}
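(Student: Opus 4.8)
The plan is to prove the two parts separately, both by direct unwinding of the definition \eqref{Eq:refineV}. For part (1), I would fix arbitrary $g_1, g_2 \in V_{i+1}$ and scalars $c_1, c_2 \in \reals$, and check that $c_1 g_1 + c_2 g_2 \in V_{i+1}$. Membership in $V_i$ is immediate since $V_i$ is itself a vector space. For the Lie-derivative condition, I would use linearity of the Lie derivative, $\lie(c_1 g_1 + c_2 g_2) = c_1 \lie(g_1) + c_2 \lie(g_2)$, together with the fact that $\pp{V_i}{d}$ is by definition a $\vspan$, hence closed under linear combinations; since $\lie(g_1), \lie(g_2) \in \pp{V_i}{d}$ by assumption, so is the combination. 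The containment $V_{i+1} \subseteq V_i$ is visible directly from the set-builder form of \eqref{Eq:refineV}.

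For part (2), I would argue both implications. If $V_i = V_{i+1}$, then for every $g \in V_i = V_{i+1}$ we have by the defining condition of $V_{i+1}$ that $\lie(g) \in \pp{V_i}{d}$, which is exactly the statement that $V_i$ is $d$-closed (Definition~\ref{Def:d-closed-vector-space}). Conversely, if $V_i$ is $d$-closed, then every $g \in V_i$ satisfies $\lie(g) \in \pp{V_i}{d}$, so every element of $V_i$ meets the membership test in \eqref{Eq:refineV}, giving $V_i \subseteq V_{i+1}$; combined with the containment $V_{i+1} \subseteq V_i$ from part (1), we get equality.

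There is no real obstacle here — the statement is essentially a bookkeeping consequence of the definitions. The only point requiring a moment's care is that $\pp{V_i}{d}$ is genuinely a vector space (so that it is closed under the linear combinations appearing in part (1)); this is true because it is defined as a $\vspan$ of a finite set of product functions, so I would just note that explicitly. I would also keep in mind that $\lie$ is $\reals$-linear on smooth functions — a standard fact used implicitly throughout this section — but I would not belabor it.
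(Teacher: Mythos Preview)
Your proposal is correct and follows essentially the same route as the paper: both parts are unwound directly from Eq.~\eqref{Eq:refineV}, using linearity of $\lie$ and the fact that $\pp{V_i}{d}$ is a span (hence a vector space) for part (1), and the two containments $V_{i+1}\subseteq V_i$ and $V_i\subseteq V_{i+1}$ for part (2). The only cosmetic difference is that the paper checks closure under arbitrary finite linear combinations $\sum_j \lambda_j g_j$ rather than just pairs $c_1 g_1 + c_2 g_2$, but this is immaterial.
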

\begin{proof}
 We prove the two parts (1) and (2) as follows.

(1) Since by Eq.~\eqref{Eq:refineV}, $V_{i+1} \subseteq V_i$, it
suffices to show that $V_{i+1}$ is a vector space. Let
$g_1,\ldots,g_k \in V_{i+1}$.  We have that $g_1,\ldots,g_k \in
V_i$. Furthermore, since $V_i$ is a vector space, any linear combination
$g:\ \sum_{j=1}^k \lambda_j g_j \in V_i$. The lie derivative $\lie(g)$
can be written as $\sum_{j=1}^k \lambda_j \lie(g_j)$.  Since
$\lie(g_j) \in \pp{V_i}{d}$, we have $\lie(g)
= \sum_{j=1}^k \lambda_j \lie(g_j) \in \pp{V_i}{d}$. Therefore, by definition
$g \in V_{i+1}$ as well. The linear combination of any finite subset
of elements from $V_{i+1}$ also belongs to $V_{i+1}$, proving that it
is a sub-space of $V_i$.

(2) If $V_{i} = V_{i+1}$, it is easy to check that $V_i$ satisfies the
definition of being $d-$ closed. For the other direction, let us
assume that $V_{i}$ is $d-$closed. Then for each $g \in V_i$, we have
$\lie(g) \in \pp{V_i}{d}$. Thus $g \in V_{i+1}$. This proves that
$V_{i+1} \supseteq V_i$. Combining with the fact that
$V_{i+1} \subseteq V_{i}$, we obtain equality.
\end{proof}

We now focus on calculating $V_{i+1}$ from $V_i$. Let
$V_i:\ \vspan(B_i)$ for a finite set $B_i$. Any element of $V_i$ can
be represented as $\sum_{h_j \in B_i} c_j h_j$ for some multipliers
$c_j$. The Lie derivative is expressed as $\sum_{h_j \in B_i}
c_j \lie(h_j)$. The procedure for calculating $V_{i+1}$ reduces to
finding the set of multipliers $(c_1,\ldots,c_M)$ where $M = | B_i|$
such that $\sum_{h_j \in B_i} c_j \lie(h_j) \in \pp{V_i}{d}$.

The key challenge lies in comparing two elements of the form $\sum_j
c_j \lie(h_j)$ and $ \sum_k d_k g_k$, for unknowns $c_j$ and $d_k$,
where $h_j \in B_i$ and $g_k \in \pp{V_i}{d}$. If both the functions
are polynomials over $\vx$, the comparison is performed by equating
the coefficients of corresponding monomials. This is illustrated using
the example below:
\begin{example}
  Consider once again the ODE from Example~\ref{Ex:motivating-example}
  and ~\ref{Ex:motivating-example-abstraction}. We seek to find an
  affine system $\scrT$ that abstracts this system. Let us consider
  the space $V_0$ generated by the basis $B_0: \{ x,y,xy,x^2,y^2 \}$
  of all degree $2$ monomials. Any element in $V_0$ can be written 
  as 
  \[ p(c_1,\ldots,c_5):\ c_1 x + c_2 y + c_3 xy + c_4 x^2 + c_5 y^2 \,.\]
  Its Lie derivative is given by 
\[\begin{array}{l}
 c_1 (xy + 2x) + c_2 ( -\frac{1}{2} y^2 + 7 y + 1) + c_3 x ( -\frac{1}{2} y^2 + 7 y + 1) \\
 + c_3 y ( xy + 2x) + c_4 (2x)(xy + 2x) + c_5 (2y) ( -\frac{1}{2} y^2 + 7 y + 1 ) \end{array}\]
This can be simplified as
\[ p'(c_1,\ldots,c_5):\ \left[ \begin{array}{l} 
c_2 + (2 c_1 + c_3) x + (7 c_2 + 2 c_5) y + (c_1 + 9 c_3 ) xy + 4 c_4 x^2 +\\
 (14 c_5 - \frac{1}{2}c_2) y^2 +  \frac{1}{2} c_3  x y^2 + 2 c_4 x^2 y - c_5 y^3 
\end{array}\right]
\,.\]
We require the Lie derivative to belong to $\pp{V}{1} = \vspan(B_0 \cup \{1\})$.
This yields the constraints:
\[ (\exists d_0, d_1, \ldots , d_5)\ (\forall\ x,y)\ 
 d_0 + d_1 x + d_2 y + d_3 xy + d_4 x^2 + d_5 y^2 = p'(c_1,\ldots,c_5) \,.\]

We use the lemma that two polynomials are identical iff their coefficients
on corresponding power-products are. This yields the following system of linear
equations:
\[ \begin{array}{l}
c_2 = d_0,\ 2 c_1 + c_3 = d_1,\ 7 c_2 + 2 c_5 = d_2,\ c_1 + 9 c_3 = d_3,\\
 4 c_4 = d_4, 14 c_5 - \frac{1}{2} c_2 = d_5,\ c_3 = 0,\ 2c_4 = 0,\  c_5 = 0 \\
\end{array} \]

Eliminating $d_0,\ldots,d_5$, we obtain the constraints $c_3 = c_4 =
c_5 = 0$.  The new basis $B_1$ is $\{ x,y \}$. \ \hfill \halmos
\end{example}

On the other hand, if the basis $B_i$ involves non-polynomials
(trigonometric or exponential functions), then encoding equality by
matching up coefficients of syntactically identical terms is
\emph{incomplete}: I.e, not all solutions can be found by equating
coefficients of matching terms.  In general, deciding if two
expressions involving trigonometric functions is identically zero is
undecidable~\footnote{ This follows from Richardson's
theorem~\cite{Petkovsek+Wilf+Zeilberger/1996/AB}.}. In practice, we
may continue to handle trigonometric functions using the same
syntactic matching technique that is complete for polynomials.  If a
$d-$closed basis is discovered this way, then it may be used to derive
a valid abstraction. On the other hand, the process may be unable to
find a vector space starting from the initial set of functions even if
one such exists.

\begin{example}
Consider a simple example with the ODE
\[ \frac{dx}{dt} = \sin(x+y),\ \ \ \frac{dy}{dt} = x + y  \,.\]
Consider the space $V$ spanned by the basis 
\[ B = \{ x,y,\sin(x), \sin(y), \cos(x), \cos(y) \}\,.\] 
Our goal is
to check if $V$ is $3-$closed.  Any element of $V$ can be written as
\[ c_1 x + c_2 y + c_3 \sin(x) + c_4 \sin(y) + c_5 \cos(x) + c_6 \cos(y) \,.\]
Its Lie derivative can be written as
\[ \begin{array}{c}
c_1 \sin(x+y) + c_2 (x+y) + c_3 \cos(x) \sin(x+y) + c_4 \cos(y) (x+y) \\
- c_5 \sin(x)\sin(x+y) - c_6 \sin(y) (x+y) 
\end{array}\,.\]
Our goal is to check if the Lie derivative belongs to 
$\pp{V}{3}$. We note that a syntactic check for membership yields the
constraints $c_1 = c_3 =  c_5 = 0$.  On the other hand, substituting the 
trigonometric identity 
\[ \sin(x+y) \equiv \sin x \cos y + \sin y \cos x \,,\]
we may indeed verify that the Lie derivative of any element
of $V$ belongs to $\pp{V}{3}$. This yields a degree $3$ algebraization
given by $\alpha(x,y): (x,y,\sin(x),\sin(y), \cos(x), \cos(y) )$ 
with the abstract system having the dynamics
\[\begin{array}{rcl}
 \frac{dw_1}{dt} &=& w_3 w_6 + w_4 w_5 \\
 \frac{dw_2}{dt} &=& w_1 + w_2 \\
 \frac{dw_3}{dt} &=& w_3 w_5 w_6 + w_5^2 w_4 \\
 \frac{dw_4}{dt} &=& w_6 w_1 + w_6 w_2 \\
 \frac{dw_5}{dt} &=& - w_3^2 w_6 - w_3 w_4 w_5 \\
\frac{dw_6}{dt} &=& - w_4 w_1 - w_4 w_2 \\
\end{array}\]
Here $w_1,\ldots,w_6$ correspond to the components of the map $\alpha$ above.\hfill\halmos
\end{example}

\begin{theorem}\label{Theorem:thm-conv}
Given an initial vector space $V_0$ and vector field $\F$, the
iterative procedure using Eqn.~\eqref{Eq:refineV} converges in finitely
many steps to a subspace $V^* \subseteq V_0$. Let
$\alpha_1,\ldots,\alpha_m$ be the basis functions that generate
$V^*$.  \begin{enumerate}
\item\label{Stmt:1}  The transformation $\alpha: (\alpha_1,\ldots,\alpha_m)$ generated by
  the basis functions of the final vector space leads to an abstract system
  whose dynamics are described by polynomials of degree at most $d$.

\item   For every  CoB transformation $\beta:
  (\beta_1, \ldots,\beta_k)$, wherein each $\beta_i \in V_0$ and
    $\beta$ yields a polynomial abstraction of degree at most $d$, it follows
    that $\beta_i \in V^*$.
\end{enumerate}
\end{theorem}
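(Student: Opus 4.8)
The plan is to take the three assertions in turn, leaning on the lemmas already in hand. For convergence, note that $V_0=\vspan(B_0)$ is finitely generated and hence finite-dimensional, and that by the preceding lemma each $V_{i+1}$ is a subspace of $V_i$. Thus $\dim(V_0)\ge\dim(V_1)\ge\dim(V_2)\ge\cdots\ge 0$ is a non-increasing sequence of non-negative integers, so it stabilizes: there is a least $k$ with $\dim(V_k)=\dim(V_{k+1})$, and since $V_{k+1}\subseteq V_k$ this forces $V_k=V_{k+1}$. Set $V^*:=V_k$; by part~(2) of the preceding lemma $V^*$ is $d$-closed, and $V^*\subseteq V_0$. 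Statement~\ref{Stmt:1} is then immediate from Theorem~\ref{Thm:d-closed-thm}: the basis functions $\alpha_1,\ldots,\alpha_m$ of the $d$-closed space $V^*$ form an algebraizing CoB transformation to a system $\scrT$ whose vector field is a polynomial vector field of degree at most $d$.

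The substance is the maximality claim~(2). First I would unpack the hypothesis. If $\beta:(\beta_1,\ldots,\beta_k)$ with each $\beta_i\in V_0$ yields a polynomial abstraction $\scrT$ of degree at most $d$, then by condition~(3) of Theorem~\ref{Theorem:simulation} together with Lemma~\ref{Lemma:useful-1} there is a polynomial vector field $G$ of degree at most $d$ with
$(\lie(\beta_1),\ldots,\lie(\beta_k)) = G(\beta_1(\vx),\ldots,\beta_k(\vx))$.
Writing $W:=\vspan(\{\beta_1,\ldots,\beta_k\})$, this says precisely that each $\lie(\beta_j)$ is an $\reals$-linear combination of products of at most $d$ functions drawn from $\{\beta_1,\ldots,\beta_k\}\cup\{\mathbf{1}\}$, i.e. $\lie(\beta_j)\in\pp{W}{d}$ for every $j$.

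I would then prove by induction on $i$ the statement $\{\beta_1,\ldots,\beta_k\}\subseteq V_i$; note the induction must be on the whole tuple at once, since $\lie(\beta_j)\in\pp{W}{d}$ involves all of the $\beta_\ell$ simultaneously. The base case $i=0$ is the hypothesis $\beta_j\in V_0$. For the step, assume $\{\beta_1,\ldots,\beta_k\}\subseteq V_i$, so $W\subseteq V_i$; since $\pp{\cdot}{d}$ is monotone under inclusion of spaces (a product of at most $d$ elements of $W\cup\{\mathbf{1}\}$, rewritten through a basis of $V_i$, expands into a linear combination of products of at most $d$ basis elements of $V_i$ together with $\mathbf{1}$), we get $\pp{W}{d}\subseteq\pp{V_i}{d}$. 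Hence $\lie(\beta_j)\in\pp{V_i}{d}$ for every $j$, which by the defining equation~\eqref{Eq:refineV} of $V_{i+1}$ is exactly $\beta_j\in V_{i+1}$. Taking $i=k$ yields $\beta_j\in V_k=V^*$ for all $j$, as required.

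The only real obstacle is the opening move of part~(2): making the translation ``$\beta$ yields a degree-$\le d$ polynomial abstraction'' $\iff$ ``$\lie(\beta_j)\in\pp{\vspan(\beta_1,\ldots,\beta_k)}{d}$ for all $j$'' fully rigorous, and recording the small but load-bearing fact that $\pp{\cdot}{d}$ respects inclusion of spaces --- this needs a line of argument because $\pp{V}{d}$ is defined via a chosen basis of $V$, and one must check that enlarging the generating set only enlarges $\pp{\cdot}{d}$. Once those two observations are in place, the inductive argument is routine, and I would record the monotonicity fact as a one-line remark immediately before the induction rather than as a separate lemma.
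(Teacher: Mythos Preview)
Your proposal is correct and follows essentially the same route as the paper: a dimension-drop argument for termination, an appeal to Theorem~\ref{Thm:d-closed-thm} for part~(1), and an induction on $i$ showing that the span of the $\beta_j$ remains inside every $V_i$ for part~(2). You are in fact more careful than the paper in two places---explicitly unpacking ``yields a degree-$\le d$ abstraction'' via Theorem~\ref{Theorem:simulation} and Lemma~\ref{Lemma:useful-1}, and flagging the monotonicity of $\pp{\cdot}{d}$---both of which the paper leaves implicit.
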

\begin{proof}
  Let us represent the iterative sequence as
  \[ V_0 \supseteq V_1 \supseteq V_2 \cdots \]
  The convergence of the iteration follows from the observation that
  if $V_{i+1} \subset V_{i}$, the dimension of $V_{i+1}$ is at least
  one less than that of $V_i$. Since $V_0$ is finite dimensional, the
  number of iterations is upper bounded by the number of basis
  functions in $V_0$.

  Statement~\ref{Stmt:1} follows directly from Theorem~\ref{Thm:d-closed-thm}.

  Finally, us assume that a transformation $\beta$ exists such that
  $\beta_i \in V_0$. We note that the space $U$ generated by
  $\mathbf{1}, \beta_1,\ldots,\beta_k$ is a subset of $V_0$ and is
  $d-$closed.  We can now prove by induction that $U \subseteq V_i$
  for each $i$.  The base case is true since $U \subseteq V_0$. 

   Next, we show that if $U \subseteq V_i$ then $U \subseteq
  V_{i+1}$. This follows from Eq.~\ref{Eq:refineV} since for each
  $p \in U$, we have $p \in V_i$ and $\lie(p) \in \pp{U}{d}$. This
  gives us $\lie(p) \in \pp{V_i}{d}$.  Therefore, $p \in V_{i+1}$.

As a result, we prove by induction that $U \subseteq V_i$ for each
   $i$.  This also means that $U \subseteq V^*$.
\end{proof}

Note that it is possible for the converged result $V^*$ to be
trivial. I.e, it is generated by the constant function $\mathbf{1}$.
\begin{example} 
Consider the Vanderpol oscillator whose dynamics are given by 
\[ \dot{x} = y,\ \dot{y}= \mu( y - \frac{1}{3} y^3 - x) \,.\]
Our search for polynomials ($\mu = 1$) of degree up to 20 did not yield a
non-trivial linearizing transformation.
\end{example}
For a trivial system, the resulting affine system $\scrT$ is
$\frac{dy}{dt} =0$ under the map $\alpha(\vx)=0$. Naturally, this
situation is not quite interesting but will often result, depending on
the system $\scrS$ and the initial basis chosen $V_0$. We now discuss
common situations where the vector space $V^*$ obtained as the result
is guaranteed to be
non-trivial.

\subsection{Strong and Constant Scale Consecution}

The notion of ``strong'' consecution, ``constant scale'' consecution
and ``polynomial scale'' consecution were defined for equality
invariants of differential equations in our previous
work~\cite{Sankaranarayanan+Sipma+Manna/2008/Constructing} and
subsequently expanded upon by Matringe et
al.~\cite{Matringe+Others/2009/Morphisms} using the notion of
morphisms.  We now show that the techniques presented in this section
can generalize strong and constant scale consecutions, ensuring that
all the systems handled by the techniques presented in our previous
work~\cite{Sankaranarayanan+Sipma+Manna/2008/Constructing} can be
handled by the techniques here (but not vice-versa).

\begin{definition}
A function $f$ satisfies the \emph{strong scale} consecution
requirement for a vector field $\F$ iff $\lie_F(f) = 0$. In
other words, $f$ is a conserved quantity.
Similarly, $f$ satisfies the \emph{constant scale} consecution  
iff $\exists \lambda \in \reals,\ \lie_F(f) = \lambda f$.
\end{definition}
The following theorem is a corollary of Theorem~\ref{Theorem:thm-conv}
and shows that the ideas presented in this section can capture
the notion of strong and constant scale consecution without
requiring quantifier elimination, solving an eigenvalue problem~\cite{Sankaranarayanan+Sipma+Manna/2008/Constructing} or finding roots of a univariate
polynomial~\cite{Matringe+Others/2009/Morphisms}.
\begin{theorem}
  The result of the iteration $V^*$ starting from an initial space
  $V_0$ contains all the strong and constant scale invariant functions
  in $V_0$.
\end{theorem}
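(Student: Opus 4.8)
The plan is to obtain the theorem as a short corollary of Theorem~\ref{Theorem:thm-conv}, whose second part already guarantees that every component of a degree-$\leq d$ CoB transformation built from functions in $V_0$ survives into $V^*$. So all that is needed is to exhibit, for each strong (resp.\ constant) scale invariant $f \in V_0$, such a transformation having $f$ among its components.

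First I would fix $f \in V_0$ with $\lie(f) = 0$ (strong scale) or $\lie(f) = \lambda f$ for some $\lambda \in \reals$ (constant scale) and consider the one-dimensional map $\alpha := (f) \colon \reals^n \to \reals$. By Lemma~\ref{Lemma:useful-1}, $J_\alpha \cdot \F(\vx) = \lie(f(\vx))$, which equals $0$ in the first case and $\lambda f(\vx) = \lambda\,\alpha(\vx)$ in the second. Hence condition~(3) of Theorem~\ref{Theorem:simulation} holds with the affine vector field $G(y) = 0$ (resp.\ $G(y) = \lambda y$) on $\reals$; choosing $Y_0 \supseteq \alpha(X_0)$ and $Y_I \supseteq \alpha(X_I)$ exactly as in the proof of Theorem~\ref{Thm:d-closed-thm}, $\alpha$ becomes a CoB transformation onto an algebraic system that is in fact affine, hence of degree $1 \leq d$. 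Applying the second part of Theorem~\ref{Theorem:thm-conv} with $\beta := \alpha$ yields $f \in V^*$. Equivalently --- and this is the underlying reason it works --- the subspace $W := \vspan(\{f\})$ of $V_0$ is $d$-closed, because $\lie(f)$, being $0$ or $\lambda f$, lies in $W \subseteq \pp{W}{d}$; and the induction in the proof of Theorem~\ref{Theorem:thm-conv} shows that every $d$-closed subspace of $V_0$ is contained in $V^*$, so again $f \in W \subseteq V^*$.

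Since $V^*$ is a vector space, it contains not merely each strong and constant scale invariant of $V_0$ but their whole span, which is the claim. I do not anticipate a real obstacle; the only delicate point is edge-case bookkeeping --- verifying that the degenerate abstract dynamics $\dot y = 0$ and $\dot y = \lambda y$ genuinely qualify as polynomial vector fields of degree at most $d$ (they do: $0$ belongs to every $\pp{\cdot}{d}$, and a degree-$1$ field is admissible whenever $d \geq 1$) and that the single-component map $\alpha = (f)$ is a valid CoB transformation per Definition~\ref{Def:semi-conjugacy}. Both are immediate from the construction above, so the proof amounts to the observation that strong and constant scale consecution are precisely the two minimal non-trivial instances of the degree-bounded abstraction for which the iteration of Theorem~\ref{Theorem:thm-conv} is complete.
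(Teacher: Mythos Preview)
Your proposal is correct and follows essentially the same approach as the paper: the paper's proof is the one-line observation that for a strong or constant scale function $f \in V_0$, the subspace $\vspan(\{f\})$ is closed under Lie derivatives, whence Theorem~\ref{Theorem:thm-conv} applies --- which is exactly your ``equivalently'' clause. Your first framing via the explicit one-dimensional CoB $\alpha = (f)$ is just a slightly more verbose unpacking of the same idea, so there is no genuine difference in route.
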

\begin{proof}
This is a direct consequence of Theorem~\ref{Theorem:thm-conv} by noting that
for a constant scale consecuting function $f$, the subspace $U \subseteq V_0$
spanned by  $f$ is closed under Lie derivatives.
\end{proof}

Furthermore, if such functions exist in $V_0$ the result after
convergence $V^*$ is guaranteed to be a non-trivial vector space (of
positive dimension). Finally, constant scale and strong scale
functions can be extracted by computing the affine equality invariants
of the linear system $\scrT$ that can be extracted from $V^*$.

\subsubsection{Stability}
We briefly address the issue of deducing stability (or instability) of
a system $\scrS$ using an abstraction to a system $\scrT$. Since
$\alpha$ satisfies the identity
\[ \G(\alpha(\vx)) = J_{\alpha}. \F(\vx) \,.\]
Every equilibrium of $\scrS$ ($\F(\vx) = 0$) maps onto an equilibrium
of $\scrT$ ($\G(\vx) = 0$), but not vice-versa. Furthermore, the map
$\alpha(\vx) = (\mathbf{0},\ldots,\mathbf{0})$ is an abstraction from
any non-linear system to one with an equilibrium at origin. Therefore,
unless restrictions are placed on $\alpha$, we are unable to draw
conclusions on liveness properties for $\scrS$ based on $\scrT$.  If
$\alpha$ has a continuous inverse, then $\scrT$ is topologically
diffeomorphic to $\scrS$~\cite{Meiss/2007/Differential}. This allows
us to correlate equilibria of $\scrT$ with those of $\scrS$.  The
preservation of stability under mappings of state variables has been
studied by Vassilyev and
Ul'yanov~\cite{Vassilyev+Ulyanov/2009/Preservation}.  We are currently
investigating restrictions that will allow us to draw conclusions
about liveness properties of $\scrS$ from those of $\scrT$.

The issue of stability preserving maps between continuous and hybrid systems
was recently addressed by the work of Prabhakar et al.~\cite{Prabhakar+Viswanathan/2012/Stability}.

\subsection{Affine CoB Abstraction: Existence}
We will now focus on the special case of CoB transformations that lead
to linear abstractions of the form $\frac{d \vec{w}}{dt} = A \vec{w}$
(and affine abstractions of the form $\frac{d \vec{w}}{dt} = A \vec{w}
+ \vec{b}$).

Let $\scrS$ be a non-linear system over $\vx$ that has a CoB
transformation $\alpha: \reals^n \rightarrow \reals^m$ with $m > 0$
that maps to a linear system $\frac{d \vec{w}}{dt} = A \vec{w}$.  

\begin{lemma}
The system $\scrS$ has $m$ conserved quantities given by the
components of the vector valued function $ e^{-tA} \alpha(\vx)$.
\end{lemma}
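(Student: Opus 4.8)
The plan is to exploit the simulation identity from Theorem~\ref{Theorem:simulation} directly at the level of the vector field, rather than reasoning trajectory-by-trajectory (although the trajectory picture is a useful sanity check). Since $\alpha$ maps $\scrS$ to the linear system $\frac{d\vec{w}}{dt}=A\vec{w}$, condition (3) of Theorem~\ref{Theorem:simulation} specializes to the polynomial (in fact affine) identity
\[ J_{\alpha}(\vx)\cdot \F(\vx) \;=\; A\,\alpha(\vx) \,, \]
valid for all $\vx$ in the domain. This is the only structural fact about $\alpha$ and $A$ that the proof needs.

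Next I would consider the vector-valued function $\beta(t,\vx) := e^{-tA}\alpha(\vx)$ and compute its total derivative along the flow of $\F$, i.e. set $\vx = \tau_x(t)$ for an arbitrary trajectory $\tau_x$ of $\scrS$ and differentiate in $t$. By the product rule, $\frac{d}{dt}\beta(t,\tau_x(t)) = -A e^{-tA}\alpha(\tau_x(t)) + e^{-tA} J_\alpha(\tau_x(t))\,\dot\tau_x(t)$, and since $\dot\tau_x(t) = \F(\tau_x(t))$ we can substitute the identity above to replace $J_\alpha\cdot\F$ by $A\alpha$. Using that $A$ commutes with $e^{-tA}$, the two terms cancel, giving $\frac{d}{dt}\beta(t,\tau_x(t)) = 0$. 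Hence each of the $m$ scalar components of $e^{-tA}\alpha(\vx)$ is constant along every trajectory of $\scrS$, which is exactly the claim. Equivalently, one may phrase this through Theorem~\ref{Theorem:simulation}: $\tau_y = \alpha\circ\tau_x$ is a trajectory of $\scrT$, so $\tau_y(t) = e^{tA}\tau_y(0)$, whence $e^{-tA}\alpha(\tau_x(t)) = \tau_y(0)$ is independent of $t$.

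I expect the computation itself to be entirely routine; the only point that deserves care is the commutativity $A e^{-tA} = e^{-tA} A$ (immediate from the power-series definition of the matrix exponential), which is what makes the cancellation work. A secondary, more expository, point is to be explicit that these are \emph{time-dependent} first integrals — functions of $(t,\vx)$ rather than of $\vx$ alone — since $e^{-tA}$ carries a genuine $t$ dependence unless $A$ is nilpotent or the relevant eigenvalues vanish. It may be worth remarking that when $A=0$ this recovers the fact that the components of $\alpha$ are ordinary (time-independent) conserved quantities, tying back to the strong-scale consecution discussion, and that in general the number of functionally independent conserved quantities obtained this way is at most $m$ (and may be fewer, since $\alpha$ need not have independent components once restricted to the image of a single trajectory).
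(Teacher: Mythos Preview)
Your proposal is correct and follows essentially the same argument as the paper: the paper also uses the identity $\lie(\alpha(\vx)) = A\alpha(\vx)$ (equivalently $J_\alpha\F = A\alpha$, via Lemma~\ref{Lemma:useful-1}) and then computes the total time derivative of $e^{-tA}\alpha(\vx)$, obtaining the same cancellation $e^{-tA}A\alpha - e^{-tA}A\alpha = 0$. Your explicit mention of the commutativity $A e^{-tA} = e^{-tA}A$ and your remark that these are time-dependent first integrals are useful clarifications, but the core computation is identical.
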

\begin{proof}
Our goal is to prove that the Lie derivative of each component of $e^{-tA} \alpha(\vx)$ equals
zero. Since $\alpha$ is a linearizing CoB, we have
$ \lie(\alpha(\vx)) = A \alpha(\vx)$. 

The Lie derivative of $e^{-tA} \alpha(\vx)$ is given by
\[ e^{-tA} \lie(\alpha(\vx)) + \partial_t e^{-tA} \alpha(\vx) = e^{-tA} A \alpha(\vx) - e^{-tA} A \alpha(\vx) = 0 \,.\]
Thus we see that the Lie derivative of $e^{-tA} \alpha(\vx)$
vanishes. Therefore, each component of $e^{-tA} \alpha(\vx)$ is a
conserved quantity.
\end{proof}

Conversely, whenever the original system $\scrS$ has conserved
quantities, it trivially admits the linearization
$\frac{d \vec{w}}{dt} = 0$ using a transformation $\alpha$ that is
formed by its conserved quantity.

\begin{theorem}\label{Thm:existence-affine-trs}
A system $\scrS$ has an independent, linearizing CoB transformation
$\alpha: \reals^n \mapsto \reals^m$ if and only if it has $m$ linearly
independent conserved quantities.
\end{theorem}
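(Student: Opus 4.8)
The plan is to prove both directions of the equivalence, leveraging the two lemmas immediately preceding the statement. The ($\Rightarrow$) direction is essentially already done: if $\scrS$ has an independent linearizing CoB transformation $\alpha:\reals^n\mapsto\reals^m$ mapping to $\frac{d\vec w}{dt}=A\vec w$, then the preceding lemma produces the $m$ components of $e^{-tA}\alpha(\vx)$ as conserved quantities. The only extra work is to argue that these $m$ conserved quantities are \emph{linearly independent} (as functions), which should follow from the fact that $\alpha$ itself has independent components together with the invertibility of $e^{-tA}$ as a matrix: any nontrivial linear dependence $\sum_i c_i (e^{-tA}\alpha(\vx))_i \equiv 0$ would, after multiplying by $e^{tA}$, give a nontrivial linear dependence among the $\alpha_i$, contradicting independence. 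One subtlety to dispatch: these conserved quantities are $t$-dependent; I would either treat them as a family or fix $t=0$ appropriately, and clarify that "conserved quantity of $\scrS$" here means a function whose Lie derivative under $\F$ vanishes — and the computation in the preceding lemma shows this holds at every fixed $t$.

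For the ($\Leftarrow$) direction, suppose $\scrS$ has $m$ linearly independent conserved quantities $h_1,\dots,h_m$, i.e. $\lie_F(h_i)=0$ for each $i$. Then I would set $\alpha:(h_1,\dots,h_m)$ and observe that the one-dimensional span of each $h_i$ (together with $\mathbf{1}$, if needed for the affine case) is trivially closed under Lie derivatives — indeed $\lie(h_i)=0\in\pp{V}{1}$ — so by Theorem~\ref{Thm:d-closed-thm} (with $d=1$) the map $\alpha$ is a linearizing CoB transformation to the system $\frac{d\vec w}{dt}=0$, which is affine (in fact linear with $A=0$). Independence of $\alpha$ as a transformation is exactly the hypothesized linear independence of the $h_i$. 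This matches the remark made just before the theorem statement that a system with conserved quantities "trivially admits the linearization $\frac{d\vec w}{dt}=0$."

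The main obstacle — really the only nontrivial point — is pinning down the precise sense in which the $m$ conserved quantities produced in the forward direction are "linearly independent," and making sure this notion is consistent with the independence requirement placed on $\alpha$ in the reverse direction. Concretely I need: (i) independence of the components of $\alpha$ $\Rightarrow$ independence of the components of $e^{-tA}\alpha(\vx)$, using $\det(e^{-tA})\neq 0$; and (ii) conversely, independence of the $h_i$ is literally the independence of $\alpha=(h_1,\dots,h_m)$. Both are routine linear algebra once the definitions are aligned, so the proof is short: cite the preceding lemma for ($\Rightarrow$), add the invertibility-of-$e^{-tA}$ observation for independence, and cite Theorem~\ref{Thm:d-closed-thm} for ($\Leftarrow$).
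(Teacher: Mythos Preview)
Your proposal is correct and follows essentially the same route as the paper. The paper does not give a formal proof of this theorem: it simply places the statement immediately after the lemma producing the components of $e^{-tA}\alpha(\vx)$ (for $\Rightarrow$) and the one-sentence converse remark that conserved quantities yield the trivial linearization $\frac{d\vec w}{dt}=0$ (for $\Leftarrow$), leaving both the independence check and the $t$-dependence issue implicit; your proposal fills in precisely those gaps.
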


The theorem extends to affine CoB transformations that yield abstract
systems of the form $\frac{d\vec{w}}{dt} = A \vec{w} + \vec{b}$.
While conservative mechanical and electromagnetic systems naturally
have conserved quantities (eg., conservation of momentum, energy,
charge, mass), many systems encountered are dissipative. Such cases
are handled by extending the approach presented here to differential
inequality abstractions~\cite{Sankaranarayanan/2011/Automatic}.

Furthermore, even in a setting where conservative quantities exist,
the advantages of searching for a CoB transformation as opposed to
directly searching for a conserved quantity from an ansatz are not
clear at a first glance. The advantage of the techniques presented
here lies in the fact that existing techniques that search for
conserved quantities focus for the most part on finding polynomial
conserved quantities. Whereas, searching for a CoB transformation
allows us to implicitly obtain conserved quantities that may involve
exponentials, sines and cosines in addition to polynomial conserved
quantities by focusing purely on reasoning with vector spaces
generated by polynomials.

\begin{example}
We observed the following conserved quantity for the system in
Example~\ref{Ex:motivating-example}
\[ \begin{array}{l}
\left(  \frac{e^{-9 t}}{51}+\frac{1}{102} \left(50+7 \sqrt{51}\right) e^{\left(-9+\sqrt{51}\right) t}+\frac{1}{102} \left(50-7 \sqrt{51}\right) e^{-\left(9+\sqrt{51}\right) t}\right)\ x +  \\
\left(  -\frac{1}{102} e^{-9 t-\left(9+\sqrt{51}\right) t} \left( 
\begin{array}{l} 
7 e^{9 t}-\sqrt{51} e^{9 t}-14 e^{\left(9+\sqrt{51}\right) t}+\\
7 e^{9 t+\left(-9+\sqrt{51}\right) t+\left(9+\sqrt{51}\right) t}+ \\
\sqrt{51} e^{9 t+\left(-9+\sqrt{51}\right) t+\left(9+\sqrt{51}\right) t}
\end{array}\right)
\right)\ xy + \\
\left( \frac{1}{204} e^{-9 t-\left(9+\sqrt{51}\right) t} \left(e^{9 t}-2 e^{\left(9+\sqrt{51}\right) t}+e^{9 t+\left(-9+\sqrt{51}\right) t+\left(9+\sqrt{51}\right) t}\right) \right) xy^2 
\end{array} \]

This is one of the three conserved quantities obtained by computing $e^{-t A} \alpha(\vx)$, where
\[ \alpha: ( x, xy, xy^2) \ \mbox{and}\ A = \left( \begin{array}{ccc} 2 & 1 & 0 \\ 1 & 9 & \frac{1}{2} \\ 0 & 2 & 16 \\ \end{array}\right) \,.\]
We are unaware of techniques that can directly generate such conserved
quantities. \hfill \halmos
\end{example}

Finally, we conclude by noting that conserved quantities such as the
one described above seem less useful for reasoning about the dynamics
of the underlying system when compared to the CoB transformation and
the resulting abstraction that gave rise to them.

\section{Abstractions for Discrete and Hybrid Systems}\label{Section:prgTransformation}
In this section, we will discuss how the techniques of the previous
sections can be extended to find CoB transformations of purely
discrete programs. In particular, our focus will be on transforming
loops in programs to infer abstractions that are of a simpler
form. Our presentation will first focus on simple loops consisting of
a single location. The combination of loops with multiple locations
and continuous dynamics will be handled in the subsequent section.

\subsection{Transition System Models}

We will first define transition system models and the action of CoB
transformations on these models. Let $\vx \in X$ represent real valued
system variables, where $X \subseteq \reals^n$.  Transition systems
will form our basic models for loops in
programs~\cite{Manna+Pnueli/95/Temporal}.
\begin{definition}\label{Def:transition-system}
  A \emph{transition system} $\Pi$ is defined by a tuple $\tupleof{X,
    L, \T, X_0, \ell_0}$, wherein,
\begin{enumerate}
\item $X \subseteq \reals^n$ represents the continuous state-space. We
  will denote the system variables by $\vx \in \reals^n$.
\item $L$ denotes a finite set of \emph{locations}.
\item $\T$ represents a finite set of \emph{transitions}. Each transition 
  $t_j \in \T$ is a tuple $\tupleof{\ell_j,m_j, G_j, F_j}$, where
\begin{itemize}
\item $\ell_j \in L$ is the pre-location of the transition, and $m_j
  \in L$ is the post-location.
\item $G_j \subseteq \reals^n$ is the guard condition on the system variables $\vx$.
\item $F_j: \reals^n \rightarrow \reals^n$ is the update function.
  \end{itemize}
\item $X_0 \subseteq X $ represents the possible set of initial values
  and $\ell_0 \in L$ represents the starting location.
\end{enumerate}
\end{definition}

\begin{figure}[t]
\begin{center}
\begin{tikzpicture}
\draw (2.5,.5) node[circle,fill=blue!20,draw=black](n0){$\ell_0$};
\path[->] (n0) edge[loop above]node[right]{$t_1$} (n0);
\path[->] (n0) edge[loop right]node[right]{$t_2$} (n0);
\draw (0,-1) node {$\begin{array}{rclp{.5cm}rcl}
\vx &:& (x,y,k) \\
L &:& \{ \ell_0\} \\
\T &:& \left\{ \begin{array}{l}
t_1:\ (\ell_0,\ell_0,G_1,F_1),\\
t_2:\ (\ell_0,\ell_0, G_2, F_2)
\end{array}\right \} \\
X_0 &:& \{ (x,y,k)\ |\ x = y = 0\ \land\ k > 0 \} \,. \\[5pt]
G_1 &:& \{ (x,y,k)\ |\ y < k \}  && G_2 &:& \{ (x,y,k)\ |\ y \geq k \} \\
F_1 &:& \lambda (x,y,k).\ (x+y^2, y+1,k) &&  F_2 &:& \lambda (x,y,k).\ (x,y,k)\\

\end{array}$};

\end{tikzpicture}
\end{center}
\caption{Transition system model for the loop in Example~\ref{Ex:motivating-example-discrete}.}~\label{Fig:prg-trs-model-ex}
\end{figure}
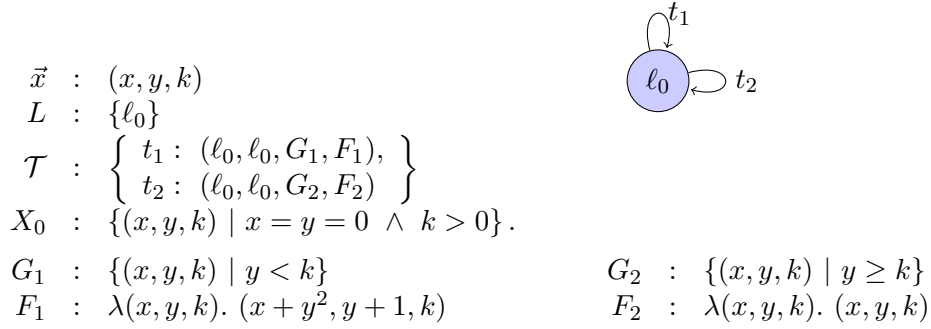

\begin{example}\label{Ex:prg-trs-model-ex}
  Figure~\ref{Fig:prg-trs-model-ex} shows an example of a transition
  system derived from a simple program that computes the sum of the
  first $k$ squares. The transition system consists of a single
  location $\ell_0$, transitions $t_1: (\ell_0,\ell_0,G_1,F_1)$ and
  $t_2: (\ell_0,\ell_0, G_2, F_2)$.  \hfill \halmos
\end{example}

A \emph{state} of the transition system is a tuple
$\sigma: \tupleof{\ell,\vx}$ where $\ell$ is the \emph{current} location and
$\vx \in X$ are the values of the continuous variables.

A \emph{run} is a finite or infinite sequence of states
\[ \sigma_0 \xrightarrow{t_0} \sigma_1 \xrightarrow{t_1} \cdots
\rightarrow \sigma_j \xrightarrow{t_j} \sigma_{j+1} \cdots \,,\] where
each $\sigma_j: (\ell_j,\vx_j)$ is a state and $t_j$ a transition,
satisfying the following conditions:
\begin{enumerate}
\item The starting state $\sigma_0: (\ell_0, \vx_0)$ is initial. I.e.,
  $\ell_0$ is the initial location of $\Pi$ and $\vx_0 \in X_0$.
\item The state $\sigma_{i+1}: (\ell_{i+1},\vx_{i+1})$ is related to the state 
$\sigma_i: (\ell_i, \vx_i)$ in the following way:
\begin{enumerate}
\item The transition $t_i \in \T$ is of the form $(\ell_i,\ell_{i+1},
  G_i,F_i)$, leading from $\ell_i$ to $\ell_{i+1}$.
\item The valuation  $\vx_i$  of the continuous variables 
satisfy the guard $G_i$ and the valuation $\vx_{i+1}$ is obtained
by executing the assignments in $F_i$ on $\vx_i$:
\[ \vx_{i} \in G_i\; \mbox{and}\; \vx_{i+1} = F_i(\vx_i) \,.\]
\end{enumerate}

\end{enumerate}

A special class of ``simple loop'' transition systems
that have a single location are defined below. 
\begin{definition}\label{Def:simple-loop}
A transition system $\Pi$ is called a \emph{simple loop} if it has
a single location. I.e., $L = \{ \ell\}$. All transitions of a simple
loop are self-loops around this location $\ell$.
\end{definition}

The transition system in Example~\ref{Ex:prg-trs-model-ex} is a simple
loop. It consists of a single location. In general, simple loops can
have multiple transitions that ``loop'' around this single location.

We will now discuss the pre-image operator $\fpre$
induced by a transition. Let $g(\vx)$ be some function over
the state variables and $t:\ (\ell,m,G,F)$ be a transition.
\begin{definition}\label{Def:fpre}
  The \emph{functional pre-image} $\fpre(g,t)$ is defined as $
  g(F(\vx))$.
\end{definition}

\paragraph{Note:} The standard precondition operator works over
assertions over the state variables, involving computing the pre-image
using $F$ and computing the intersection of the result with the
guard. The functional precondition defined here is defined over
functions $g(\vx)$ over the state variables.

\begin{example}
Consider the transition 
\[t: ( \ell,m, G,F),\ \mbox{wherein}\  G:\ \{ (x,y)\ |\ x \geq y \}, F:\ \lambda (x,y).\ (x^2, y^2 - x^2 ) ) \,.\]
The functional pre-image of the function $g(x,y): x+y$,
denoted $\fpre(x+y,t)$, is given by
\[ \fpre(x+y,t):\ (x^2) + ( y^2 - x^2) = y^2 \,.\] 

To contrast with the standard pre-condition operator, which applies to
assertions over states, let us consider the assertion $x+y \geq 0$.
We have 
\[ \pre(x+y \geq 0, t):\ y^2 \geq 0\ \land\ x \geq y \,.\]
\hfill\halmos
\end{example}

We now show that $\fpre$ is a linear operator over functions.
\begin{lemma}\label{Lemma:linear-fpre}
  For any transition $t$ and functions $g_1,g_2,g$ over $\vx$, we have
  $\fpre(g_1+g_2,t) = \fpre(g_1,t) + \fpre(g_2,t)$ and further,
  $\fpre(\lambda g) = \lambda \fpre(g)$ for any $\lambda \in \reals$.
\end{lemma}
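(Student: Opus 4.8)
The plan is to unwind Definition~\ref{Def:fpre} directly. Recall that for a transition $t:\ (\ell,m,G,F)$ we set $\fpre(g,t) = g \circ F$, i.e.\ the function $\vx \mapsto g(F(\vx))$. Since sums and scalar multiples of real-valued functions are defined pointwise, the entire claim reduces to the elementary fact that right-composition with a fixed map $F$ is a linear operation on the vector space of functions $\reals^n \to \reals$.

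Concretely, for the additivity part I would fix an arbitrary point $\vx$ and compute
\[ \fpre(g_1 + g_2, t)(\vx) = (g_1 + g_2)(F(\vx)) = g_1(F(\vx)) + g_2(F(\vx)) = \fpre(g_1,t)(\vx) + \fpre(g_2,t)(\vx)\,, \]
where the middle equality is just the definition of the pointwise sum evaluated at $F(\vx)$. Since $\vx$ was arbitrary, the two functions $\fpre(g_1+g_2,t)$ and $\fpre(g_1,t)+\fpre(g_2,t)$ agree everywhere, hence are equal. For homogeneity I would argue identically: $\fpre(\lambda g, t)(\vx) = (\lambda g)(F(\vx)) = \lambda\, g(F(\vx)) = \lambda\, \fpre(g,t)(\vx)$ for all $\vx$ and all $\lambda \in \reals$.

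There is essentially no obstacle in this proof; the only point worth stating carefully is that equality of functions is being checked pointwise and that $F$ plays no role beyond being a fixed argument substitution, so no hypotheses on $F$ (continuity, being polynomial, etc.) are needed. The lemma is used later to justify treating $\fpre$ as a linear map on finite-dimensional spaces of candidate basis functions, exactly as $\lie$ was used for the ODE case, so it suffices to record the linearity cleanly and move on.
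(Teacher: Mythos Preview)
Your proposal is correct and matches the paper's approach exactly: the paper simply states that the proof follows by directly applying Definition~\ref{Def:fpre}, and your argument is precisely that unwinding, carried out explicitly and pointwise. There is nothing to add.
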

\begin{proof}
Proof follows by directly applying Def.~\ref{Def:fpre}.
\end{proof}

Let us consider any run of the transition system
\[r: \sigma_0 \xrightarrow{t_0} \sigma_1 \rightarrow \cdots
\rightarrow \sigma_i \xrightarrow{t_i} \sigma_{i+1} \cdots \,.\] Let
$t_i:(\ell_i,\ell_{i+1}, G_i, F_i)$ denote the transition between
$\sigma_i:(\ell_i,\vx_i)$ and $\sigma_{i+1}:(\ell_{i+1},
\vx_{i+1})$. Finally, let $g(\vx)$ be any function over the state
variables of the transition system.
\begin{lemma}
  The following identity holds for all successive pairs of states $
  (\ell_i, \vx_i) \xrightarrow{t_i} ( \ell_{i+1}, \vx_{i+1})$
  encountered in a run of the transition system and for all functions
  $g(\vx)$:
 \[ \fpre(g, t_i ) (\vx_i) \equiv   g(\vx_{i+1}) \,\]
\end{lemma}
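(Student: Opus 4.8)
The plan is to unfold the definitions directly. By Definition~\ref{Def:fpre}, the functional pre-image $\fpre(g, t_i)$ is the function $\vx \mapsto g(F_i(\vx))$, so $\fpre(g, t_i)(\vx_i) = g(F_i(\vx_i))$. On the other hand, since the pair $(\ell_i, \vx_i) \xrightarrow{t_i} (\ell_{i+1}, \vx_{i+1})$ occurs in a run, the run conditions (specifically condition 2(b) in the definition of a run) require that $\vx_{i+1} = F_i(\vx_i)$. Substituting this into the right-hand side of the displayed identity gives $g(\vx_{i+1}) = g(F_i(\vx_i))$, which is exactly what we computed for the left-hand side. Hence the two sides agree.

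First I would recall the transition $t_i$ has the form $(\ell_i, \ell_{i+1}, G_i, F_i)$ as guaranteed by the run conditions, so that the update function appearing in $\fpre(g, t_i)$ is indeed $F_i$, matching the $F_i$ used to produce $\vx_{i+1}$. Then I would write the single chain of equalities
\[ \fpre(g, t_i)(\vx_i) \;=\; g(F_i(\vx_i)) \;=\; g(\vx_{i+1}) \,, \]
citing Definition~\ref{Def:fpre} for the first step and the run condition $\vx_{i+1} = F_i(\vx_i)$ for the second. No linearity of $\fpre$ or any of the vector-space machinery is needed here; this is a purely definitional identity.

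There is essentially no obstacle: the lemma is a bookkeeping statement that records the compatibility between the functional pre-image operator and the forward execution semantics of the transition system, and it will be used later (presumably by iterating it along a run) to relate a CoB-transformed function's value at a state to its pre-image value at the predecessor state. The only care needed is to be explicit that the guard $G_i$ plays no role in the functional pre-image (unlike the standard precondition operator $\pre$), so membership $\vx_i \in G_i$ is not invoked — the identity holds for the states that actually appear in a run precisely because those states already satisfy the guard by the run conditions, but the algebraic identity itself is guard-free.
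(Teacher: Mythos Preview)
Your proposal is correct and follows essentially the same approach as the paper: unfold Definition~\ref{Def:fpre} to get $\fpre(g,t_i)(\vx_i) = g(F_i(\vx_i))$, then invoke the run condition $\vx_{i+1} = F_i(\vx_i)$ to conclude. The additional remarks about the guard playing no role are accurate but not needed for the argument itself.
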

\begin{proof}
  We may write $\fpre(g, t_i ) (\vx_i) = g(F(\vx_i))$.  We know that
  $\vx_{i+1} = F(\vx_i)$. Therefore, $g(\vx_{i+1}) = g(F(\vx_i)) =
  \fpre(g, t_i ) (\vx_i) $.
\end{proof}

We will now discuss change-of-basis abstractions for transition
systems.  The discussion will focus on defining change-of-basis
abstractions for simple loops, which are represented by a transition
system with a single location $\ell$
(Cf. Definition~\ref{Def:simple-loop}). The subsequent sections will
extend this concept to arbitrary transition systems.

\subsection{CoB Abstractions For Simple Loops}

Consider a simple loop $\Pi$ over $\vx \in \reals^n$ with a single
location $\ell$, transitions $\{t_1,\ldots,t_k\}$, and initial
condition $X_0$. We seek to abstract $\Pi$ with another simple loop
$\Xi$ over $\vy \in \reals^l$ with a single location $m$, transitions
$\{t_1',\ldots,t_k'\}$ and initial condition $Y_0$.

\begin{definition}\label{Def:cob-abstraction-simple-loops}
  Simple loop $\Xi$ is a CoB abstraction of $\Pi$ iff there is a
  continuous function $\alpha: \reals^n \rightarrow \reals^l$ such
  that
\begin{enumerate}
\item The initial condition $Y_0 \supseteq \alpha(X_0)$,
\item For each transition $t_i: (\ell,\ell,G_i,F_i)$ in $\Pi$, there
is a corresponding transition $t_i': (m,m, G_i', F_i')$ in $\Xi$ such that
\begin{enumerate}
\item $G_i' \supseteq \alpha(G_i)$,
\item $\forall\ \vx\ F_i'(\alpha(\vx)) =  \alpha(F_i(\vx))$.
\end{enumerate}
\end{enumerate}
\end{definition}

We will now present an example of CoB abstraction for simple loops.
\begin{example}\label{Ex:prg-abstraction}
  Consider the simple loop from Example~\ref{Ex:prg-trs-model-ex}
  (also Fig.~\ref{Fig:prg-trs-model-ex}). We note that the map
  \[ \alpha: \reals^3 \rightarrow\ \reals^4,\ \mbox{where}\ \alpha=
  \lambda(x,y,k).  (x, y, k, y^2) \,, \] yields an abstract transition
  system $\Xi$ over variables $\vw: (w_1,w_2,w_3,w_4)$. Informally,
  the variables $(w_1,w_2,w_3,w_4)$ are place holders for the
  expressions $(x,y,k,y^2)$, respectively. The resulting transition
  system $\Xi$ is

  \[ \begin{array}{rcl}
    \vw &:& (w_1,\ldots,w_4) \\
    L &:& \{ m \} \\
    \T &:& \{t_1':\ (m,m,G_1', F_1') ,t_2': (m,m,G_2', F_2')\} \\
    X_0 &:& w_1 = w_2 = w_4 = 0\ \land\ w_3 \geq 1 \\
    G_1' &:& \{ \vw\ |\ w_2 < w_3 \} \\
    G_2' &:& \{ \vw\ |\ w_2 \geq w_3 \}\\
    F_1' &:& \lambda\ \vw.\ (w_1 + w_4, w_2+1,w_3, w_4 + 2 w_2 + 1) \\
    F_2'  &:& \lambda\ \vw.\ \vw \\
  \end{array}\]
  The various requirements laid out in Definition~\ref{Def:cob-abstraction-simple-loops}
  can be easily verified. We will verify the requirement for $F_1'$: 
  $F_1'(\alpha(x,y,k)) = \alpha(F_1(x,y,k))$, as follows:
\[ \begin{array}{rclcl}
F_1'(\alpha(x,y,k)) &=& F_1'(x,y,k,y^2) &=& (\underset{w_1+w_4}{\underbrace{x+y^2}}, \underset{w_2+1}{\underbrace{y+1}}, \underset{w_3}{\underbrace{k}}, \underset{w_4+2w_2 +1 }{\underbrace{y^2+2y +1}}) \\[5pt]
&=& \alpha(x+y^2,y+1,k) &=& \alpha(F_1(x,y,k))\\
\end{array} \,.\]
\hfill\halmos
\end{example}

The definition of CoB abstraction immediately admits the following key theorem.
\begin{theorem}\label{Thm:run-correspondence}
For any run 
\[ \sigma_0: (\ell,\vx_0) \xrightarrow{t_0} (\ell, \vx_1) \xrightarrow{t_1} (\ell,\vx_2) \xrightarrow{t_2} \cdots \]
the corresponding sequence of $\Xi$-states
\[ \gamma_0: (m, \alpha(\vx_0)) \xrightarrow{t_0'} (m,\alpha(\vx_1)) \xrightarrow{t_1'} (m, \alpha(\vx_2)) \xrightarrow{t_2'} \cdots \,,\]
is a run of $\Xi$.
\end{theorem}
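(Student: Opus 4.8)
The plan is to proceed by induction on the length of the run, establishing that the displayed $\Xi$-sequence satisfies the two conditions in the definition of a run of a transition system (the initiality of the first state, and the step-by-step transition relation), using only the three clauses of Definition~\ref{Def:cob-abstraction-simple-loops}. The base case is immediate: since $\sigma_0 = (\ell,\vx_0)$ is an initial state of $\Pi$, we have $\ell = \ell_0$ and $\vx_0 \in X_0$; then $\alpha(\vx_0) \in \alpha(X_0) \subseteq Y_0$ by clause~(1), so $\gamma_0 = (m,\alpha(\vx_0))$ is an initial state of $\Xi$ (here $m$ is the unique location, hence the starting location).

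For the inductive step, consider a consecutive pair $(\ell,\vx_i) \xrightarrow{t_i} (\ell,\vx_{i+1})$ appearing in the run of $\Pi$, where $t_i : (\ell,\ell,G_i,F_i)$. By the definition of a run, $\vx_i \in G_i$ and $\vx_{i+1} = F_i(\vx_i)$. I must check that the corresponding $\Xi$-step $(m,\alpha(\vx_i)) \xrightarrow{t_i'} (m,\alpha(\vx_{i+1}))$ is legal for the transition $t_i' : (m,m,G_i',F_i')$ furnished by clause~(2). First, $\alpha(\vx_i) \in \alpha(G_i) \subseteq G_i'$ by clause~(2a), so the guard is satisfied. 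Second, applying clause~(2b) at the point $\vx = \vx_i$ gives $F_i'(\alpha(\vx_i)) = \alpha(F_i(\vx_i)) = \alpha(\vx_{i+1})$, so the updated valuation of the $\Xi$-state is exactly $\alpha(\vx_{i+1})$, as displayed. Since $t_i'$ goes from $m$ to $m$ and $\Xi$ has only the location $m$, the pre- and post-locations match trivially. This establishes the transition-relation condition for every $i$, so the entire sequence $\gamma_0 \xrightarrow{t_0'} \gamma_1 \xrightarrow{t_1'} \cdots$ is a run of $\Xi$.

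There is essentially no hard part here: the theorem is a direct unwinding of the definitions, and the main "obstacle" is merely bookkeeping — keeping the correspondence $t_i \leftrightarrow t_i'$ straight and being careful that the guard condition and the commutation identity are invoked at the right point $\vx_i$. One subtlety worth a remark: the run of $\Pi$ may be finite or infinite; the induction covers both uniformly, since it only ever reasons about a single consecutive pair of states and the initial state. If the original run is finite and ends at some $\sigma_N$, the constructed $\Xi$-run simply ends at $\gamma_N = (m,\alpha(\vx_N))$, and no transition condition needs to be checked past that point.
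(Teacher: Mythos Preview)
Your proposal is correct and follows essentially the same route as the paper: both argue that each enabled step $(\ell,\vx_j)\xrightarrow{t_j}(\ell,\vx_{j+1})$ in $\Pi$ yields an enabled step $(m,\alpha(\vx_j))\xrightarrow{t_j'}(m,\alpha(\vx_{j+1}))$ in $\Xi$ via clauses~(2a) and~(2b), and then lift this to full runs by induction on prefixes. Your treatment is in fact slightly more explicit than the paper's, since you spell out the base case using clause~(1) for the initial state, whereas the paper leaves that implicit in the phrase ``induction over prefixes of the traces.''
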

\begin{proof}
  Proof uses the property that whenever the move $(\ell,\vx_j)
  \xrightarrow{t_j} (\ell,\vx_{j+1})$ is enabled in $\Pi$ then the
  move $(m,\alpha(\vx_j)) \xrightarrow{t_j'} (m,\alpha(\vx_{j+1}))$ is
  enabled  in $\Xi$.

  Let $t_j$ be described by the guard $G_j$ and the functional update
  $F_j$.  Likewise, let $t_j'$ be described by $G_j'$ and $F_j'$.  We
  note that $\alpha(G_j) \subseteq G_j'$. Since $\vx_j$ satisfies the
  guard of $t_j$, $\alpha(\vx_j)$ satisfies that of $t_j'$.  The state
  obtained after the transition is given by 
\[  F'(\alpha(\vx_j)) =  \alpha(F(\vx_{j})) = \alpha(\vx_{j+1}) \,.\]

  We have proved that whenever the move $(\ell,\vx_j)
  \xrightarrow{t_j} (\ell,\vx_{j+1})$ is possible in $\Pi$ then the
  move $(m,\alpha(\vx_j)) \xrightarrow{t_j'} (m,\alpha(\vx_{j+1}))$ is
  possible in $\Xi$. The rest of the proof extends this to trace
  containment through induction over prefixes of the traces.
\end{proof}

As a direct consequence, we may state a theorem that corresponds to
Theorem~\ref{Thm:weak-preserve-inv} for the case of vector fields.
\begin{theorem}
  Let $\denotation{\varphi}$ be an invariant set for the abstract system
  $\Xi$. Then, $\alpha^{-1}(\denotation{\varphi})$ is an invariant of the original
  system $\Pi$.
\end{theorem}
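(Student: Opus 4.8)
The plan is to derive this directly from Theorem~\ref{Thm:run-correspondence}, which already does the heavy lifting. First I would recall what it means for $\alpha^{-1}(\denotation{\varphi})$ to be an invariant of $\Pi$: every reachable state $(\ell,\vx)$ of $\Pi$ must satisfy $\vx \in \alpha^{-1}(\denotation{\varphi})$, equivalently $\alpha(\vx) \in \denotation{\varphi}$. So I would take an arbitrary run $\sigma_0 \xrightarrow{t_0} \sigma_1 \xrightarrow{t_1} \cdots$ of $\Pi$ with $\sigma_j = (\ell,\vx_j)$, and an arbitrary index $j$, and show $\alpha(\vx_j) \in \denotation{\varphi}$.

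The key step is to invoke Theorem~\ref{Thm:run-correspondence}: the sequence $\gamma_0 : (m,\alpha(\vx_0)) \xrightarrow{t_0'} (m,\alpha(\vx_1)) \xrightarrow{t_1'} \cdots$ is a genuine run of the abstract system $\Xi$. In particular $(m,\alpha(\vx_j))$ is a reachable state of $\Xi$. Since $\denotation{\varphi}$ is an invariant set of $\Xi$ by hypothesis, every reachable state of $\Xi$ lies in $\denotation{\varphi}$, so $\alpha(\vx_j) \in \denotation{\varphi}$. By definition of the inverse image this gives $\vx_j \in \alpha^{-1}(\denotation{\varphi})$. Since the run and the index were arbitrary, every reachable state of $\Pi$ lies in $\alpha^{-1}(\denotation{\varphi})$, which is exactly the claim. (One small point to state carefully: Theorem~\ref{Thm:run-correspondence} is phrased for simple loops with a single location $\ell$, so I would either restrict the statement to that setting or note that the abstract-run construction extends component-wise to the general transition-system case, using clause~2 of Definition~\ref{Def:cob-abstraction-simple-loops} transition-by-transition.)

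I would also handle the base of the run explicitly for completeness: $\vx_0 \in X_0$ implies $\alpha(\vx_0) \in \alpha(X_0) \subseteq Y_0$, so $\gamma_0$ starts from an initial state of $\Xi$; this is what makes $\gamma_0$ a bona fide run rather than merely a state sequence, and it is already covered inside the proof of Theorem~\ref{Thm:run-correspondence}, so citing that theorem suffices.

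There is essentially no obstacle here — the theorem is a corollary, exactly as the text announces (\emph{``As a direct consequence''}), and mirrors the proof of Theorem~\ref{Thm:weak-preserve-inv} for vector fields: the only thing to be careful about is the standard subtlety that ``invariant'' is a statement about \emph{all} reachable states along \emph{all} runs, so the argument must quantify over runs and indices rather than reasoning about a single trajectory. If anything needs a contrapositive flourish in the style of Theorem~\ref{Thm:weak-preserve-inv}, one could instead suppose some reachable $(\ell,\vx_j)$ of $\Pi$ has $\vx_j \notin \alpha^{-1}(\denotation{\varphi})$, push the witnessing run through $\alpha$ via Theorem~\ref{Thm:run-correspondence}, and obtain a reachable state $(m,\alpha(\vx_j))$ of $\Xi$ outside $\denotation{\varphi}$, contradicting invariance of $\denotation{\varphi}$.
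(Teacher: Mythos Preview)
Your proposal is correct and follows essentially the same route as the paper: invoke Theorem~\ref{Thm:run-correspondence} to conclude that any state $(\ell,\vx)$ reachable in $\Pi$ maps to a reachable state $(m,\alpha(\vx))$ of $\Xi$, then use invariance of $\denotation{\varphi}$ in $\Xi$ to get $\alpha(\vx)\in\denotation{\varphi}$, hence $\vx\in\alpha^{-1}(\denotation{\varphi})$. The paper's proof is just a terser version of your direct argument; your additional remarks about the base case and the contrapositive variant are fine but not needed.
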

\begin{proof}
  First, we note from Theorem~\ref{Thm:run-correspondence} that if
  $(\ell,\vx)$ is reachable in $\Pi$ then $(m,\alpha(\vx))$ is
  reachable in $\Xi$.  Since $\varphi$ is an invariant for $\Xi$, we
  have $(m,\alpha(\vx)) \in \denotation{\varphi}$. Therefore for any
  reachable state $(\ell,\vx)$ in $\Pi$, we have $ (\ell,\vx) \in 
  \alpha^{-1}(\denotation{\varphi})$. Thus
  $\alpha^{-1}(\denotation{\varphi})$ is an invariant set for $\Pi$.
\end{proof}

Given an invariant $\varphi[\vy]$ for $\Xi$ in the form of an
assertion, the invariants for the original system are obtained simply
by substituting $\alpha(\vx)$ in the place of $\vy$ in $\varphi$.

\begin{example}
Consider the transition system $\Pi$ from Example~\ref{Ex:prg-trs-model-ex}
and its abstraction $\Xi$ in Example~\ref{Ex:prg-abstraction}.  We note
that $\Xi$ has affine guards and updates. Therefore, we may use a standard
polyhedral analysis tool to compute invariants over $\Xi$~\cite{Cousot+Halbwachs/78/Automatic,Halbwachs+Proy/97/Verification,Sankaranarayanan+Colon+Sipma+Manna/2006/Efficient}. Some of the invariants obtained include
\[ \begin{array}{l}
  13 w_4 \leq  9 w_1 +24 w_2  \ \land\   7 w_4 \leq  6 w_1 +11 w_2  \ \land\  4 w_1 +7 w_2 - 7 w_4 +11 w_3  \geq  11 \\
  2 w_1 +3 w_2 - 3 w_4 +4 w_3  \geq  4 \ \land\   w_4 \leq  2 w_1 + w_2   \ \land\  3 w_4 \leq   w_1 +12 w_2  \\
  9 - w_1 -3 w_2 + 3 w_4 -9 w_3  \leq  0 \ \land\  w_2 \geq  0 \ \land\  1  \leq  w_3 \ \land\ 
  w_2 - w_3  \leq  0\\
\end{array}\]
By substituting $w_1 \mapsto x, w_2 \mapsto y, w_3 \mapsto k, w_4 \mapsto y^2$  on these invariants, we conclude invariants
for the original system. For instance, we conclude facts such as
\[ 13 y^2 - 24 y - 9 x \geq 0\ \land\ 7 y^2 - 11 y - 6x \geq 0\ \land\ 11 k - 7 y^2 + 7 y + 4 x \geq 11 \,.\]
\hfill\halmos
\end{example}

The goal, once again, is to find an abstraction $\alpha$ and an
abstract system $\Xi$ starting from a description of the system $\Pi$.
Furthermore, we require that the update functions $F_j'$ in $\Xi$ are
all polynomials whose degrees are smaller than some given limit $d >
0$.  In particular, if we set $d = 1$, we are effectively requiring
all the updates in $\Xi$ to be affine functions over $\vy$.
 
Our strategy will be to find a map $\alpha: \reals^n \rightarrow
\reals^k$.  For convenience, we will write $\alpha$ as
$(\alpha_1,\ldots,\alpha_k)$, wherein each component function
$\alpha_j: \reals^n \rightarrow \reals$.  Let $V$ be the vector space
spanned by the components of $\alpha$, i.e, $V = \vspan(\{
\alpha_1,\ldots,\alpha_k \})$.  Our goal will be to ensure that for
each transition $t$ in $\Pi$ and for each $\alpha_i$, 
\begin{equation}\label{Eq:vector-space-closure-loops}
 \forall\ \vx,\ \fpre(\alpha_i(\vx),t)\ \in\ \pp{V}{d} \,.
\end{equation}

Let $V$ be a vector space that satisfies Eq.~\eqref{Eq:vector-space-closure-loops}
for each transition $t$ in $\Pi$. We will say that the space 
$V$ is \emph{$d$-closed} w.r.t $\Pi$.

\begin{theorem}
  Let $V: \vspan(g_1,\ldots,g_k)$ be $d$-closed w.r.t $\Pi$ for
  continuous functions $g_1,\ldots,g_k$.  The map $\alpha:
  (g_1,\ldots,g_k)$ is a CoB transformation defining an abstract
  system $\Xi$, wherein each transition of $\Xi$ has a polynomial
  update function involving polynomials of degree at most $d$.
\end{theorem}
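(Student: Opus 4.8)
The plan is to mirror the proof of Theorem~\ref{Thm:d-closed-thm} (the ODE analogue), replacing the Lie derivative by the functional pre-image $\fpre$ and ``$d$-closed under Lie derivatives'' by ``$d$-closed w.r.t.\ $\Pi$'' in the sense of Eq.~\eqref{Eq:vector-space-closure-loops}. First I would fix a transition $t_i:(\ell,\ell,G_i,F_i)$ of $\Pi$ and apply the $d$-closedness hypothesis to each basis function $g_j$: since $\fpre(g_j,t_i)(\vx) = g_j(F_i(\vx)) \in \pp{V}{d}$, we may write
\[ g_j(F_i(\vx)) = \sum_{r} c^{(i)}_{j,r}\, g_{r_1}(\vx)\, g_{r_2}(\vx)\cdots g_{r_d}(\vx)\,, \]
a finite sum over multi-indices $r=(r_1,\ldots,r_d)$ with each factor in $\{g_1,\ldots,g_k\}\cup\{\mathbf{1}\}$ and $c^{(i)}_{j,r}\in\reals$. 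Replacing every occurrence of $g_\ell$ by a fresh variable $y_\ell$ yields a polynomial $q^{(i)}_j(\vy)$ over $\vy=(y_1,\ldots,y_k)$ of degree at most $d$ satisfying the key identity $q^{(i)}_j(g_1(\vx),\ldots,g_k(\vx)) = g_j(F_i(\vx))$ for all $\vx$.

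Next I would assemble $\Xi$ explicitly: a single location $m$, variables $\vy$, and for each $t_i$ a transition $t_i':(m,m,G_i',F_i')$ with $F_i' := (q^{(i)}_1,\ldots,q^{(i)}_k)$ (a polynomial update of degree at most $d$ by construction), $G_i' := \alpha(G_i)$, and $Y_0 := \alpha(X_0)$ (here any supersets would also do). It then remains to check the three conditions of Definition~\ref{Def:cob-abstraction-simple-loops}. Condition (1) and condition (2a) hold by the choices of $Y_0$ and $G_i'$; continuity of $\alpha=(g_1,\ldots,g_k)$ holds because the $g_j$ are assumed continuous. For condition (2b), the $j$-th component of $\alpha(F_i(\vx))$ is $g_j(F_i(\vx))$, while the $j$-th component of $F_i'(\alpha(\vx))$ is $q^{(i)}_j(g_1(\vx),\ldots,g_k(\vx))$, and these agree for all $\vx$ precisely by the defining identity of $q^{(i)}_j$. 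This establishes that $\alpha$ is a CoB abstraction with the stated degree bound, and Theorem~\ref{Thm:run-correspondence} then transfers runs as usual.

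The genuinely routine parts are the multi-index bookkeeping and the observation that substitution of the $g_j$ for the $y_j$ commutes with evaluation. The one subtlety I would flag is that the representation of $g_j(F_i(\vx))$ as an element of $\pp{V}{d}$ need not be unique when the $g_j$ are functionally dependent (e.g.\ $g_1 = g_2^2$); this is harmless, since any fixed choice of the coefficients $c^{(i)}_{j,r}$ produces a valid $q^{(i)}_j$ and hence a valid abstract system — distinct choices merely give different-looking but equally correct $\Xi$'s. As in the continuous case, making $G_i'$ and $Y_0$ polyhedral (so the downstream polyhedral analysis applies) is an implementation concern separate from, and not required by, the existence claim proved here.
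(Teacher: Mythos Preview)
Your proposal is correct and follows essentially the same approach as the paper: use $d$-closedness to express each $\fpre(g_j,t_i)=g_j(F_i(\vx))$ as a degree-$d$ polynomial in the $g_\ell$, then substitute fresh variables $y_\ell$ (the paper uses $w_\ell$) to define the abstract update $F_i'$, taking $G_i'=\alpha(G_i)$ (or a superset). If anything, your write-up is slightly more complete than the paper's, since you explicitly verify condition~(2b) of Definition~\ref{Def:cob-abstraction-simple-loops} and note the harmless non-uniqueness of the representation in $\pp{V}{d}$, points the paper leaves implicit.
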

\begin{proof}
  We construct the abstract system $\Xi$ with variables
  $w_1,\ldots,w_k$ representing the functions $g_1,\ldots,g_k$ that
  are the components of $\alpha$.  $\Xi$ has a single location $m$ and
  for each transition $t_i \in \Pi$, we construct a corresponding
  transition $t_i' \in \Xi$ as follows.

  Let $G_i, F_i$ be the guard set and update function for $t_i$,
  respectively. The guard set for $t_i'$ is given by $\alpha(G_i)$ or
  an over-approximation thereof. Likewise, the update $F_i'$ for
  $t_i'$ is derived as follows. We note that
\[ \fpre(g_j, t_i) = \sum_{r}
  c_{r_1,r_2,\ldots,r_k} g_1^{r_1} g_2^{r_2} \cdots g_k^{r_k}\,,\]
 wherein  $0 \leq r_1 + r_2 + \ldots + r_k \leq d$. The corresponding update
 for $w_j$ in the abstract system is given by
 \[ F_i' (w_j) = \sum_{r} c_{r_1,r_2,\ldots,r_k} w_1^{r_1} w_2^{r_2}
 \cdots w_k^{r_k}\,.\] 

Note that each function $F_i'(w_j)$ is a
 polynomial of degree at most $d$ over $w_1,\ldots,w_k$.
\end{proof}

Since the operator $\fpre$ used to define the closure in
Eq.~\eqref{Eq:vector-space-closure-loops} is a linear operator
(Cf. Lemma~\ref{Lemma:linear-fpre}), we may check the closure property
for a given vector space $V$ by checking if its basis functions
satisfy the property.
\begin{lemma}
  The vector space $V: \vspan(\{g_1,\ldots,g_k\})$ is $d$-closed w.r.t
  $\Pi$ iff for each basis element $g_i$ of $V$, and for each
  transition $t$ in $\Pi$, $\fpre(g_i,t) \in \pp{V}{d}$.
\end{lemma}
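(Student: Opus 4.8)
The plan is to mirror, essentially verbatim, the argument used for the Lie-derivative version of this statement (the lemma following Definition~\ref{Def:d-closed-vector-space}), replacing the Lie derivative by the functional pre-image operator $\fpre(\cdot,t)$ ranging over the transitions $t$ of $\Pi$. The two structural facts I will lean on are that $\pp{V}{d}$ is itself a vector space (it is defined as a $\vspan$, hence closed under scalar multiples and finite sums), and that $\fpre(\cdot,t)$ is a linear operator on functions, which is precisely Lemma~\ref{Lemma:linear-fpre}.

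For the forward direction, suppose $V$ is $d$-closed w.r.t.\ $\Pi$. By the definition of $d$-closure, Eq.~\eqref{Eq:vector-space-closure-loops} holds for every transition $t$ of $\Pi$ and every element of a spanning set of $V$; since $\{g_1,\ldots,g_k\}$ is in particular such a spanning set, we get $\fpre(g_i,t)\in\pp{V}{d}$ for each $i$ and each $t$, which is the stated condition.

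For the converse, assume $\fpre(g_i,t)\in\pp{V}{d}$ for every basis element $g_i$ and every transition $t$ of $\Pi$. Fix a transition $t$ and an arbitrary $g\in V$, and write $g=\sum_{i=1}^k c_i g_i$ with $c_i\in\reals$. Lifting Lemma~\ref{Lemma:linear-fpre} to arbitrary finite linear combinations by a straightforward induction on $k$, we obtain $\fpre(g,t)=\sum_{i=1}^k c_i\,\fpre(g_i,t)$ as an identity of functions of $\vx$. Each term $c_i\,\fpre(g_i,t)$ lies in $\pp{V}{d}$ since $\fpre(g_i,t)\in\pp{V}{d}$ and $\pp{V}{d}$ is closed under scalar multiples, and their sum lies in $\pp{V}{d}$ since it is closed under finite sums. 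As $t$ and $g$ were arbitrary, Eq.~\eqref{Eq:vector-space-closure-loops} holds throughout $V$, i.e.\ $V$ is $d$-closed w.r.t.\ $\Pi$.

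The argument is entirely routine and I do not expect a genuine obstacle. The only points deserving a moment's care are bookkeeping ones: that $\fpre(\cdot,t)$ acts pointwise in $\vx$, so that the memberships above are identities of functions matching the ``$\forall\,\vx$'' quantifier in Eq.~\eqref{Eq:vector-space-closure-loops}; and that Lemma~\ref{Lemma:linear-fpre}, as stated for pairs of functions and single scalar multiples, must first be extended by induction to arbitrary finite linear combinations before it can be invoked in the converse direction.
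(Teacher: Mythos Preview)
Your proposal is correct and follows essentially the same approach as the paper: the forward direction is immediate from the definition, and the converse uses the linearity of $\fpre(\cdot,t)$ (Lemma~\ref{Lemma:linear-fpre}) together with the fact that $\pp{V}{d}$ is itself a vector space to pass from basis elements to arbitrary linear combinations. The only difference is that you spell out a couple of routine bookkeeping points (the pointwise-in-$\vx$ reading and the inductive extension of Lemma~\ref{Lemma:linear-fpre} to finite sums) that the paper leaves implicit.
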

\begin{proof}
  For the non-trivial direction, let $V$ be a space where for each
  basis element $g_i$ of $V$, and for each transition $t$ in $\Pi$,
  $\fpre(g_i,t) \in \pp{V}{d}$. An arbitrary element $g \in V$ can be
  written as a linear combination of its basis elements: $g = \sum_j
  \lambda_j g_j$.  We have $\fpre(g,t) = \sum_j \lambda_j
  \fpre(g_j,t)$ from Lemma~\ref{Lemma:linear-fpre}.  Since
  $\fpre(g_j,t) \in \pp{V}{d}$, which is a vector space itself, we
  have that $\fpre(g,t)$ is a linear combination of elements in
  $\pp{V}{d}$ and thus $\fpre(g,t) \in \pp{V}{d}$. Thus $V$ is
  $d$-closed.
\end{proof}

\begin{example}
  Once again, consider the system $\Pi$ in
  Example~\ref{Ex:prg-trs-model-ex} and the map $\alpha: (x,y,k,y^2)$
  from Example~\ref{Ex:prg-abstraction}.  The components of this map
  are the functions $\alpha_1: x, \alpha_2: y, \alpha_3: k,
  \mbox{and}\ \alpha_4: y^2$. We may verify that the vector space $V:
  \vspan(\{x,y,k,y^2\})$ satisfies the closure property in
  Eq.~\eqref{Eq:vector-space-closure-loops} for $d=1$.  The table
  below shows the results of applying $\fpre$ on each of the basis
  elements.
  \[ \begin{array}{|l|l|l|}
\hline
    \mbox{Basis function}\ g_j & \fpre(g_j,t_1) & \fpre(g_j,t_2) \\ 
\hline
                 x            & x + y^2        & x \\
                 y            & y +1           & y \\
                 k            & k              & k \\
                 y^2          & y^2 + 2y +1    & y^2 \\
\hline
\end{array}\]
Thus, $\fpre(g_j,t_k)$ belongs to $\pp{V}{1} = \vspan(\{1,x,y,k,y^2\})$. \hfill\halmos

\end{example}

\paragraph{Searching for Abstractions:} The procedure for finding
abstractions is identical to that used for vector fields with the
caveat that closure under Lie-derivative is replaced by closure under
$\fpre(\cdot,t_j)$ for every transition $t_j$ in the system. The
procedure takes as input an initial basis of functions $B_0$ and
iteratively refines the vector space $V_i: \vspan(B_i)$ by removing
all the functions that do not satisfy the closure property.

\begin{example}
Consider the system $\Pi$ in Example~\ref{Ex:prg-trs-model-ex} and the
initial basis consisting of all monomials of degree at most $2$ over
variables $x,y,k$. We obtain the basis $B_0: \{ x,y,k, x^2, y^2, k^2,
xy, yk, xk \}$ and the space $V_0: \vspan(B_0)$. An element of 
$V_0$ can be written as 
\[ p:\ \left[\begin{array}{c}
c_1 x + c_2 y + c_3 k + c_4 x^2 + c_5 y^2 + c_6 k^2\\
 + c_7 xy + c_8 yk + c_9 xk\,.
\end{array}\right]\]
We consider the transition $t_1$ with update $F_1: \lambda (x,y,k). (x
+ y^2 , y+1, k)$. Transition $t_2$ is ignored as its update is simply
the identity relation. We have $\fpre(p,t_1)$ as
\[ \fpre(p,t_1):\ \left[ \begin{array}{c}
(c_2+ c_5) + (c_1 + c_7) x  + (c_2 + 2 c_5) y + (c_3 + c_8) k + c_4 x^2 +\\
 (c_1 + c_5 +c_7) y^2 + c_6 k^2 + c_7 xy + c_7 y^3 + c_4 y^4 + 2 c_4 xy^2 + \\
c_8 yk +  c_9 xk + c_9 y^2 k 
\end{array}\right] \]
The ``overflow'' terms $c_7 y^3$, $c_4 y^4$, $c_9 y^2k$ immediately
yield the constraints $c_4 = c_7 = c_9 = 0$.  The refined basis is $
B_1: \{ x,y, k, y^2, k^2, yk \}$. The iterative process converges with
$V_1: \vspan(B_1)$ yielding a linearization. \hfill\halmos
\end{example}

\subsection{Abstractions for General Transition Systems}
Thus far, we have presented CoB abstractions for simple loops
consisting of a single location. The ideas seamlessly extend to
systems with multiple locations with a few generalizations that will
be described in this section.

Let $\Pi$ be a system with a set of locations
$L=\{\ell_1,\ldots,\ell_k\}$ and transitions $\T$. We will assume that
$|L| \geq 2$ so that the system is no longer a simple loop.  The main
idea behind change of basis (CoB) transformations for systems with
multiple locations is to allow a different map for each location. In
other words, the abstraction is defined by a maps $\alpha_\ell(\vx)$
for each location $\ell \in L$. 

The maps for two different locations $\ell_1$ and $\ell_2$ are of the
type $\alpha_{\ell_1}: \reals^n \rightarrow \reals^{m_1}$ and
$\alpha_{\ell_2}: \reals^n \rightarrow \reals^{m_2}$. In general, we
may assume that $m_1 \not= m_2$. This discrepancy can be remedied by
padding each $\alpha_{\ell_i}$ with extra components that map to the
constant function $0$. While, this transformation violates the linear
independence requirement between the various components in $\alpha$,
it makes the resulting abstract system easier to describe. Without
loss of generality, we assume that all the maps $\alpha_\ell$ for each
$\ell \in L$ are of the form $\alpha_\ell:\ \reals^n \rightarrow
\reals^m$ for a fixed $m >  0$.

\begin{definition}\label{Def:cob-abstraction-general}
  A system $\Xi$ is a CoB abstraction of $\Pi$ through a collection of
  maps $\alpha_{\ell_1}, \ldots,\alpha_{\ell_k}$ each of the type
  $\reals^n \rightarrow \reals^m$, corresponding to locations
  $\ell_1,\ldots,\ell_k$, iff
\begin{enumerate}
\item $\Xi$ has locations $m_j$ corresponding to $\ell_j \in L$ for
$1 \leq j \leq k$ , and transitions $t_i'$ corresponding to transition 
$t_i \in \T$.
\item For each transition $t_i: \tupleof{\ell_{pre},\ell_{post}, G_i, F_i}$ in 
$\Pi$, the corresponding transition $t_i': \tupleof{m_{pre}, m_{post}, G_i', F_i'}$
is such that 
\begin{enumerate}
\item   $m_{pre}$ and $m_{post}$ correspond to $\ell_{pre}$ and $\ell_{post}$, respectively, 
\item $G_i' \supseteq \alpha_{\ell_{pre}}(G_i)$, 
\item $ (\forall\ \vx)\ F_i'(\alpha_{\ell_{pre}} (\vx)) =  \alpha_{post}(F_i(\vx))$.
\end{enumerate}
\end{enumerate}
\end{definition}

We note that for a simple loop with a single location, the definition
above is identical to Def.~\ref{Def:cob-abstraction-simple-loops}.
\begin{figure}[t]
\begin{tabular}{cc}
\begin{minipage}{5cm}
{\small
\begin{lstlisting}
    int  x,y,z;
    // .. initialize.. 
    #\label{pt-while}#while (x + y - z <= 100){
     (x,y):=( x + z * (x - y) , 
              y + z * (y - x));
     // x,y,z unmodified here 
     (x,y,z) := (z+1 , 
                 x+y -1 , 
                 z+x+y -1 );
     }
\end{lstlisting}
}
\end{minipage} & 
\begin{minipage}{4cm}
\begin{center}
\begin{tikzpicture}
\begin{scope}[xshift=1.5cm,yshift=2cm]
\matrix[every node/.style={circle, fill=blue!30, draw=black, inner sep=2pt}, row sep=10pt, column sep=10pt]{
  \node[inner sep=0pt, minimum size=0pt](p0){}; & \node(n1){\scriptsize $\ell_1$}; & \node[inner sep=0pt, minimum size=0pt](p1){};\\
\node[inner sep=0pt, minimum size=0pt](p2){};  & \node(n2){\scriptsize$\ell_2$};  &  \node(n3){\scriptsize$\ell_3$};\\
};
\path[->] (n1) edge node[right]{$t_1$} (n2)
(n1) edge[-] node[above]{$t_3$}  (p1) 
(p1) edge (n3)
(n2) edge [-] (p2)
(p2) edge [-] node[left]{$t_2$} (p0)
(p0) edge (n1);
\end{scope}
\begin{scope}
\draw (0,0) node {
$\begin{array}{rcl}
  L &:& \{ \ell_1,\ell_2,\ell_3\} \\
  \T &:& \{ t_1, t_2, t_3 \} \\
  t_1 &:& \tupleof{\ell_1, \ell_2, G_1, F_1 } \\
  t_2 &:& \tupleof{\ell_2, \ell_1, G_2, F_2 } \\
  t_3 &:& \tupleof{\ell_1, \ell_3, G_3, F_3 } \\[5pt]
  G_1 &:& \{ (x,y,z)\ |\ x + y -z \leq 100 \} \\
  F_1 &:& \left( 
     x + z x - zy,
     y + z y - z x,
      z    \right) \\[5pt]
  G_2 &:& \mathbb{Z}^3\\
  F_1 &:&  \left( z+1, x+y-1, z +x + y -1 \right) \\[5pt]
  G_3 &:& \{ (x,y,z)\ |\ x + y -z >  100 \} \\
  F_3 &:&  (x,y,z)
 \end{array}$};
\end{scope}
\end{tikzpicture}
\end{center}
\end{minipage}
\end{tabular}
\caption{ An example program fragment with multiple locations and its transition system.}\label{Fig:trs-example-2}

\end{figure}
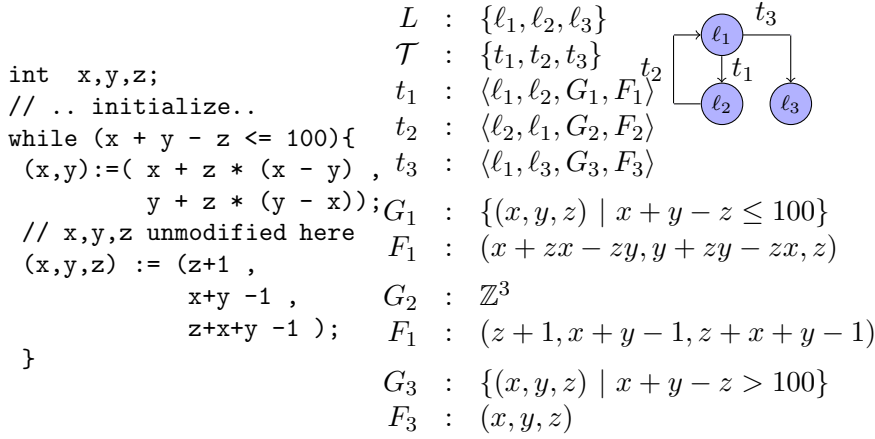
\begin{example}\label{Ex:trs-example-2}
Figure~\ref{Fig:trs-example-2} shows an example of a transition system with 
multiple locations.  Consider the following CoB transformation:
\[ \begin{array}{rcl}
\alpha_{\ell_1} &:& ( z^2, yz, xz, z, y^2, xy, y, x^2, x) \\
\alpha_{\ell_2} &:& (z^2, yz+xz, z, y, y^2+2xy+x^2 , x, 0 , 0,0 )\\
\alpha_{\ell_3} &:&  ( z^2, yz, xz, z, y^2, xy, y, x^2, x) \\
\end{array}\]
The transformation yields an abstraction $\Xi$ 
of the original system.  The abstract system has $9$
variables $w_0,\ldots,w_8$. The structure of 
$\Xi$ mirrors that of $\Pi$ with three locations
$m_1,m_2,m_3$ corresponding to $\ell_1,\ell_2,\ell_3$, respectively
and three transitions $t_1',t_2' $ and $t_3'$ corresponding
to $t_1,t_2$ and $t_3$ in $\Pi$. The guards and updates of 
the transition $t_1'$  are
\[\begin{array}{rcl}
 G_1' &:& \{ (w_0,\ldots,w_8)\ |\ w_8+w_6 -w_3 \leq 100 \},\\
F_1' &:& (w_0, w_1 + w_2, w_3, w_1 - w_2 + w_6, w_4 + 2w_5 + w_7, - w_1 + w_2 + w_8, 0, 0, 0 ) \\
\end{array}\]
We verify the key condition that ensures that $t_1'$ is an abstraction of
$t_1$: 
\[ \alpha_{\ell_2}(F_1(x,y,z)) = F_1'(\alpha_{\ell_1}(x,y,z)) \,.\]
The LHS $\alpha_{\ell_2}(F_1(x,y,z)) = \alpha_{\ell_2}( x+ zx - zy, y + zy- zx, z)$ is given by
\[ (z^2, zx + zy, z, y + zy- zx, x^2 + 2xy + y^2, x + zx - zy, 0,0,0 ) \,.\]
The RHS $F_1' (\alpha_{\ell_1}(x,y,z)) = F_1'(  z^2, yz, xz, z, y^2, xy, y, x^2, x)$ is given by 
\[ (z^2, xz + yz, z, y - zx + zy, y^2 + 2xy + x^2, x + zx - zy, 0,0,0
) \,.\] 
The identity of LHS and RHS is thus verified. \hfill\halmos
\end{example}

Our goal once again is to search of a collection of transformations
$\alpha_{\ell}$, for each $\ell \in L$ such that the resulting system
is described by polynomial updates of degree at most $d$. The case
where $d=1$ corresponds to affine updates.  Once again, we generalize
the notion of a $d-$closed vector space.  Consider a collection of
vector spaces $V_{\ell}: \vspan(B_{\ell})$ for each location $\ell \in
L$.

\begin{definition}
  We say that the collection $V_{\ell}, \ell \in L$ is $d-$closed for
  transition system $\Pi$ if and only if for each transition $t_j:
  \tupleof{\ell_{pre}, \ell_{post}, G_j, F_j}$ and for each element $p
  \in V_{post}$, we have $ \fpre(p,t_j) \in \pp{V_{pre}}{d}$.
\end{definition}

The notion of $d-$closed vector spaces can be related to CoB
transformations and resulting abstractions whose updates are defined
by means of polynomials of degree at most $d$.

\begin{theorem}
Let $V_{\ell}, \ell \in L$ be a collection of vector spaces that are
$d-$closed for a system $\Pi$. The basis elements of $V_{\ell}$ yields
a collection of maps $\alpha_\ell,\ \ell \in L$ that relate $\Pi$ to a
CoB abstraction $\Xi$. The update maps of $\Xi$ are all polynomials of
degree at most $d$.
\end{theorem}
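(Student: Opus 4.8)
The plan is to mirror the proof of the simple-loop case, but with the location-indexed collection of maps.  First I would define, for each location $\ell \in L$, the map $\alpha_\ell: \reals^n \rightarrow \reals^m$ by taking its components to be the basis elements $\{g_{\ell,1},\ldots,g_{\ell,m}\}$ of $V_\ell$ (padding with the zero function so that all the $\alpha_\ell$ have the same arity $m$, as discussed just before Definition~\ref{Def:cob-abstraction-general}).  I would then construct $\Xi$ over fresh variables $w_1,\ldots,w_m$: it has one location $m_j$ for each $\ell_j \in L$, and for each transition $t_i: \tupleof{\ell_{pre},\ell_{post},G_i,F_i}$ of $\Pi$ a corresponding transition $t_i': \tupleof{m_{pre},m_{post},G_i',F_i'}$, where $G_i' := \alpha_{\ell_{pre}}(G_i)$ (or any over-approximation of it), and $F_i'$ is read off componentwise from the $d$-closure hypothesis.

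The core step is the construction of $F_i'$.  By the $d$-closed hypothesis, for each basis element $g_{\ell_{post},j}$ of $V_{post}$ we have $\fpre(g_{\ell_{post},j},t_i) \in \pp{V_{pre}}{d}$, so it can be written as
\[
 \fpre(g_{\ell_{post},j},t_i) = \sum_{\vec r} c^{(i,j)}_{\vec r}\, g_{\ell_{pre},1}^{r_1}\cdots g_{\ell_{pre},m}^{r_m},\qquad 0 \le r_1+\cdots+r_m \le d\,.
\]
I then define the $j$-th component of $F_i'$ by substituting the abstract variable $w_s$ for each occurrence of $g_{\ell_{pre},s}$, i.e.\ $F_i'(w_j) := \sum_{\vec r} c^{(i,j)}_{\vec r}\, w_1^{r_1}\cdots w_m^{r_m}$, which is by inspection a polynomial of degree at most $d$.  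With this definition I would verify the two conditions of Definition~\ref{Def:cob-abstraction-general}: condition (b) ($G_i' \supseteq \alpha_{\ell_{pre}}(G_i)$) holds by construction, and condition (c), $F_i'(\alpha_{\ell_{pre}}(\vx)) = \alpha_{\ell_{post}}(F_i(\vx))$, follows componentwise because $F_i'(\alpha_{\ell_{pre}}(\vx))_j = \sum_{\vec r} c^{(i,j)}_{\vec r}\, g_{\ell_{pre},1}(\vx)^{r_1}\cdots g_{\ell_{pre},m}(\vx)^{r_m} = \fpre(g_{\ell_{post},j},t_i)(\vx) = g_{\ell_{post},j}(F_i(\vx)) = \alpha_{\ell_{post}}(F_i(\vx))_j$, using Definition~\ref{Def:fpre} and the fact that the components of $\alpha_{\ell_{post}}$ are exactly the $g_{\ell_{post},j}$.  (For the padded zero components this is the trivial identity $0 = 0$.)  Condition (1) of the definition of a CoB abstraction regarding the initial condition is met by choosing $Y_0 \supseteq \alpha_{\ell_0}(X_0)$.

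I expect the routine componentwise verification above to go through without difficulty; the only mild subtlety, and hence the ``main obstacle,'' is bookkeeping around the padding: ensuring that using the same arity $m$ at every location (with extra $0$ components) is consistent with the definition of $\fpre$ and does not break the degree bound (it does not, since $0$ is degree $0 \le d$) or the $d$-closure property (the zero function trivially lies in $\pp{V_{pre}}{d}$), at the cost of giving up linear independence of the $\alpha_\ell$ components, which the definition of a general CoB abstraction does not require.  I would remark that when $|L| = 1$ this construction specializes exactly to the simple-loop theorem already proved, so the argument is a direct generalization and no new machinery is needed.
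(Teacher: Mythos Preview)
Your proposal is correct and is exactly the natural argument: it generalizes the simple-loop construction (the theorem following Definition~\ref{Def:cob-abstraction-simple-loops}) to the location-indexed setting, reading off each component of $F_i'$ from the $d$-closure expansion of $\fpre(g_{\ell_{post},j},t_i)$ in the basis of $V_{\ell_{pre}}$ and then verifying condition (c) of Definition~\ref{Def:cob-abstraction-general} componentwise. The paper in fact states this theorem without proof, presumably because it is a routine extension of the simple-loop case already proved; your write-up fills in precisely that extension, including the bookkeeping about zero-padding that the paper mentions informally before Definition~\ref{Def:cob-abstraction-general}.
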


\begin{example}
  Consider the transition system described in
  Example~\ref{Ex:trs-example-2} and
  Figure~\ref{Fig:trs-example-2}. We wish to discover an affine
  abstraction for this system automatically. Starting from the initial
  collection of vector spaces that maps each location to the space of
  all polynomials of degree at most $2$ over $x,y,z$, we obtain the
  transformations $\alpha_{\ell_1}, \alpha_{\ell_2}, \alpha_{\ell_3}$
  described in the same example. This yields an abstract system over
  variables $w_0,\ldots,w_8$.
\end{example}

\subsection{Combining Discrete and Continuous Systems}

As a final step, we extend our approach to hybrid systems that combine
discrete and continuous dynamics.  We define hybrid systems briefly
and extend the results from Sections~\ref{Section:covAbstraction} and
\ref{Section:prgTransformation} to address hybrid systems.

\begin{definition}\label{Def:hybrid-system}
  A hybrid system consists of a discrete transition system $\Pi:
  \tupleof{X,L, \T, X_0, \ell_0}$ and a mapping that associates each
  location $\ell_i \in L$ with a continuous subsystem $\scrS_i:
  \tupleof{\F_i, X_{i}}$ over the state-space $X$, consisting of a
  vector field $\F_i$ and location invariant $X_i$.
\end{definition}

A state $\sigma$ of the hybrid system consists of a tuple
$\tupleof{\ell,\vx,T}$ where $\ell \in L$ is the current location,
valuations to the continuous variables $\vx \in X $ and the current
time $T \geq 0$. 

Given a time $\delta \geq 0$, we write $\tupleof{\ell,\vx,T}\
\underset{\delta}{\leadsto}\ \tupleof{\ell,\vy,T+\delta}$ to denote
that starting from state $\tupleof{\ell,\vx,T}$ the hybrid system
\emph{flows} continuously according to the continuous subsystem
$\scrS_{\ell}$ corresponding to the location $\ell$. Likewise, we
write $\tupleof{\ell,\vx,T} \xrightarrow{t_j} \tupleof{\ell', \vx',
  T}$ to denote a \emph{jump} between two states upon taking a
discrete transition $t_j$ from $\ell$ to $\ell'$.  Note that no time
elapses upon taking a jump.

A run $R$ of the hybrid system is given by a countable sequence of
alternating flows (evolution according to the ODE inside a location)
and jumps (discrete transition to a different location) starting
from an initial state:
\[ \sigma_0: \tupleof{\ell_0, \vx_0, 0} \underset{\delta_0}{\leadsto}
\sigma_0': \tupleof{\ell_0,\vy_0, \delta_0} \xrightarrow{t_1}\ \sigma_1:
\tupleof{\ell_1,\vx_1, \delta_0} \underset{\delta_1}{\leadsto}\ \sigma_1':
\tupleof{\ell_1,\vy_0, \delta_0+\delta_1} \xrightarrow{t_2} \cdots \] 

To avoid Zenoness, we require that the summation of the dwell times in
the individual modes $ \sum_{j=0}^{\infty} \delta_j$ diverges.

We now define CoB abstractions for hybrid systems. Our definitions
simply combine aspects of the definition for transition
systems~\ref{Def:cob-abstraction-general} and continuous
systems~\ref{Def:semi-conjugacy}.

A CoB abstraction of the hybrid system is obtained through a
collection of maps $\alpha_{\ell_1}, \ldots, \alpha_{\ell_k}$
corresponding to the locations $\ell_1, \ldots,\ell_k$ of the hybrid
system. It is assumed that by padding with $0$s, we obtain each
$\alpha_{\ell_i}$ as a function $ \reals^n \rightarrow \reals^m$.

\begin{definition}
 A system $\Xi$ is a CoB abstraction of $\Pi$ through a collection of
  maps $\alpha_{\ell_1}, \ldots,\alpha_{\ell_k}$ each of the type
  $\reals^n \rightarrow \reals^m$, corresponding to locations
  $\ell_1,\ldots,\ell_k$, iff
\begin{enumerate}
\item $\Xi$ has locations $m_j$ corresponding to $\ell_j \in L$ for $1
  \leq j \leq k$ , and transitions $t_i'$ corresponding to transition
  $t_i \in \T$. Each location $m_j$ in $\Xi$ has an associated
  continuous system $\scrT_j$.
\item For each corresponding location pair $\ell_j, m_j$, the system
  $\scrT_j$ is a CoB abstraction of $\scrS_j$ through the
  transformation $\alpha_{\ell_j}$.

\item For each transition $t_i: \tupleof{\ell_{pre},\ell_{post}, G_i,
    F_i}$ in $\Pi$, the corresponding transition $t_i':
  \tupleof{m_{pre}, m_{post}, G_i', F_i'}$ are such that
\begin{enumerate}
\item   $m_{pre}$ and $m_{post}$ correspond to $\ell_{pre}$ and $\ell_{post}$, respectively, 
\item $G_i' \supseteq \alpha_{\ell_{pre}}(G_i)$, 
\item $ (\forall\ \vx)\ F_i'(\alpha_{\ell_{pre}} (\vx)) =  \alpha_{post}(F_i(\vx))$.
\end{enumerate}
\end{enumerate}
\end{definition}

Once again, we focus on searching for an abstraction $\Xi$ of a given
hybrid system wherein the continuous abstraction for each location and
that of each transition is expressed by means of polynomials degree
bounded by some fixed bound $d$. The case where the bound is $d=1$
specifies an affine hybrid abstraction $\Xi$. We translate this into a
$d-$closure condition for vector spaces.  Consider a collection of
vector spaces $V_{\ell}: \vspan(B_{\ell})$ for each location $\ell \in
L$.

\begin{definition}
We say that the collection $V_{\ell},
\ell \in L$ is $d-$closed for hybrid system $\Pi$ if and only if
\begin{enumerate}
\item For each location $\ell \in L$, the corresponding vector space
  $V_{\ell}$ is $d$-closed w.r.t to the vector field $\F_{\ell}$
for the continuous subsystem $\scrS_{\ell}$.
 
\item For each transition $t_j: \tupleof{\ell_{pre}, \ell_{post}, G_j,
    F_j}$ and for each element $p \in V_{post}$, we have $
  \fpre(p,t_j) \in \pp{V_{pre}}{d}$.

\end{enumerate}

\end{definition}

Once again, the approach for finding a $d$-closed collection
$V_{\ell},\ \ell \in L$ starts from an initial basis $V_{\ell}^{(0)}$
at each location $\ell$ and refines the basis. Two types of
refinements are applied (a) refinement of $V_{\ell}$ to enforce
closure w.r.t the Lie derivative of its basis elements for the vector
field $\F_{\ell}$ and (b) refinement of $V_{m}$ w.r.t a transition $t:
\tupleof{\ell,m,G,F}$ incoming at location $m$.

\section{Implementation and Evaluation}\label{Section:implementation}

We have implemented the ideas described in this paper to derive affine
abstractions for (a) continuous systems described by ODEs with
polynomial right-hand sides, (b) discrete systems with assignments
that have polynomial RHS and (b) hybrid systems with polynomial ODEs
and discrete transition updates. Our approach takes as inputs the
system description, a degree limit $k > 0$ that is used to construct
the initial basis. Starting from this initial basis, our approach
iteratively applies refinement until convergence. Upon convergence, we
print the basis inferred along with the resulting abstraction. 

Currently, our implementation does not abstract the guard sets of the
transitions and the invariant sets of the ODEs. However, once the
basis is inferred, the abstractions for the guards of the transition
and mode invariants are obtained using quantifier elimination
techniques (which is quite expensive in
practice)~\cite{Collins/75/Quantifier, Collins+Hong/91/Partial,
Dolzmann+Sturm/97/REDLOG} or optimization techniques such as Linear
programming or SOS programming~\cite{Parillo/2003/Semidefinite}. Our
implementation currently relies on manual translation of invariant and
guard assertions into the new basis to form the abstract transition
system.

If a non-trivial abstraction is discovered by our iterative scheme, we
may use a linear invariant generator on the resulting affine system to
infer invariants that relate to the original transition system.

Our implementation and the benchmarks used in the evaluation presented
in this section may be obtained upon request.

\subsection{Continuous Systems}

We first describe experimental results obtained for continuous systems
described by ODEs. Figure~\ref{Table:continuous-expt-results}
summarizes the results on continuous system benchmarks. We collected
nearly $15$ benchmark systems and ran our implementation to
search for a linearizing CoB transformation. We report on the degree
of the monomials in the initial basis, time taken to converge and the
number of polynomials in the final basis that form the transformation
to the abstract system. 

\paragraph{Trivial Transformations Found:} Some of the benchmarks
attempted resulted in trivial final transformations. Examples include
the well-known Fitzhugh-Nagumo neuron model, the vanderpol oscillators
and similar small but complex systems that are known to be
non-integrable.

We now highlight some of the interesting results, while summarizing
all benchmarks in Table~\ref{Table:continuous-expt-results}.

\paragraph{Toda Lattice with Boundary Particles:} The Toda lattice
models an infinite array of point particles such that the position and
velocity of the $n^{th}$ particle are affected by its neighbors the
$(n-1)^{th}$ and $(n+1)^{th}$ particle for $n \in
\mathbb{Z}$~\footnote{See description by G{\"o}ktas and
  Hereman~\cite{Goktas+Hereman/2011/Symbolic} and references therein.}.
We consider a finite version of this lattice with $2$ fixed boundary
particles that are constrained to have a fixed position and zero
velocity and $K$ particles in the middle.  The dynamics for $K=2$
non-fixed particles are given by position variables $y_1,y_2$,
velocities $v_1,v_2$ and extra state variables $u_1,u_2$ to model the
interaction with neighbors.
\[ \begin{array}{rcl rcl rcl }
\frac{dx_1}{dt} &=& v_1 & \frac{d v_1}{dt} &=& v_1 (u_1 - u_2) & \frac{du_1}{dt} &=& - v_1 \\
\frac{dx_2}{dt} &=& v_2 & \frac{d v_2}{dt} &=& v_2 u_2 & \frac{du_2}{dt} &=&  v_1 - v_2 \\
\end{array} \]
In addition, we add time $t$ as a variable to the model with dynamics
$\frac{dt}{dt} = 1$.  Our approach initialized with polynomials of
degree $2$ discovers a basis with $10$ polynomials:
\[ \begin{array}{l}
w_1:\ -2 v_2 - 2 v_1 - u_2^2 + 2 x_1 u_1 + x_2^2,\\
 w_2:\ -2 v_2 - 2 v_1 - u_2^2 + u_1 u_2 + x_2 u_1 + x_1 u_2 + x_1 u_1 + x_1 x_2 \\
w_3:\ -2 v_2 - 2 v_1 + 2 u_1u_2 + 2 x_2 u_2 + 2 x_2 u_1 + x_2^2,\ w_4:\ u_1 + x_1,\\
 w_5:\ 2 v_2 + 2 v_1 + u_1^2 + u_2^2,\ w_6:\ u_2 + x_2 - x_1,\ w_7:\ t,\\
 w_8:\ u_1 t + x_1 t,\ w_9:\ u_2 t + u_1 t + x_2 t,\ w_{10}:\ t^2
\end{array} \]

The resulting abstract system has linear dynamics given by: 
\[ \frac{dw_j}{dt} = 0,\ 1 \leq j \leq 6,\ \frac{d w_7}{dt} = 1,\ \frac{dw_8}{dt} = w_4,\ \frac{dw_9}{dt} = w_4 + w_6,\ \frac{dw_{10}}{dt} = 2 w_7 \,.\]
Results for larger instances are reported in
Table~\ref{Table:continuous-expt-results}.

\paragraph{Quadratic Fermi-Pasta-Ulam-Tsingou System:} Consider a system considered by 
Fermi et al.~\cite{Fermi+Pasta+Ulam/1955/Studies}. The system consists
of a chain of particles at positions $x_1,\ldots,x_N$ with fixed boundary
particles $x_0 =0$ and $x_{N+1} = N+1$.
The dynamics are given by 
\[ \frac{d^2 x_i}{dt^2} = ( x_{i+1} + x_{i-1}  - 2 x_i) + \alpha ( ( x_{i+1}^2 - x_i^2) - (x_i - x_{i-1})^2) \,,\ 1 \leq i \leq N\]
We consider an instantiation with $N=3$, searching for CoB transformations with an initial basis of monomials of degree up to $4$. We obtain a transformation representing a conserved quantity
\[ \begin{array}{c}
\frac{1}{2} ( v_1^2 + v_2^2 + v_3^2) + x_1^2 + x_2^2 + x_3^2  - 3 x_3 ( 1  + 3a - a x_3 ) \\
-x_2x_3 ( 1 +  a x_3 - a x_2)  -x_1x_2( 1 + a x_2 - a x_1) 
\end{array}\,.\]
The abstract system is given by  $ \frac{dw_1}{dt} = 0$.

\paragraph{Two Mass Spring System:} Consider the dynamics of two masses connected 
by a spring to each other and to two fixed walls. The state variables are 
$(x_1,x_2, v_1, v_2)$ indicating the position and velocity of the masses while 
the spring constant $k$ is a parameter. The dynamics are given by
\[ \begin{array}{rcl rcl}
\frac{dx_1}{dt} &=& v_1 & \frac{dx_2}{dt} &=& v_2 \\
\frac{dv_1}{dt} &=& k x_2 - 2 k x_1 & \frac{dv_2}{dt} &=& k ( x_1 - x_2 ) \\
\end{array}\]
Our procedure yields a change of basis transformation 
\[ w_1:\ v_2^2 + v_1^2 + k x_2^2  - 2k  x_1 x_2  + 2k x_1^2,\ w_2:\ v_1 v_2 - \frac{1}{2} v_1^2 - \frac{1}{2} k x_2^2 + 2 k x_1 x_2 - \frac{3}{2} k x_1^2 \]
Both $w_1,w_2$ represent conserved quantities, yielding the abstraction
\[ \frac{dw_1}{dt} = \frac{dw_2}{dt} = 0 \,.\]

\paragraph{Biochemical reaction network:} We consider a
 biochemical reaction network benchmark from Dang et
 al.~\cite{Dang+Maler+Testylier/2010/Accurate}. The ODE along with the
 values are parameters in our model coincide with those used by Dang
 et al. The ODE consists of $12$ variables and roughly $14$
 parameters.  Our search for degree bound $\leq 3$ discovers a
 transformation generated by five basis functions (in roughly $3$
 seconds). 

\paragraph{Collision Avoidance}
We consider the algebraic abstraction of the roundabout mode of a
collision avoidance system analyzed recently by Platzer et
al.~\cite{Platzer+Clarke/09/Differential} and earlier by Tomlin et
al.~\cite{Tomlin+Pappas+Sastry/1998/Conflict}.  The two airplane
collision avoidance system consists of the variables $(x_1,x_2)$
denoting the position of the first aircraft, $(y_1,y_2)$ for the
second aircraft, $(d_1,d_2)$ representing the velocity vector for
aircraft 1 and $(e_1,e_2)$ for aircraft $2$. $\omega,\theta$ abstract
the trigonometric terms. In addition, the parameters $a,b,r_1,r_2$ are
also represented as system variables. The dynamics are modeled by the
following differential equations:
\[\begin{array}{ccccccccc}
x_1'  = d_1 & x_2'= d_2 & d_1' = -\omega d_2 & d_2' = \omega d_1 \\
y_1'  = e_1 & y_2'= e_2 & e_1' = -\theta e_2 & e_2' = \theta e_1 \\
a'=0 & b'=0 & r_1' = 0 & r_2' = 0\\
\end{array}\]

A search for transformations of degree $2$ yields a closed vector
space with 27 basis functions within $0.2$ seconds.  The basis
functions include $a,b,r_1,r_2$ and all degree two terms involving
these. Removing these from the basis, gives us $14$ basis functions that
yield a transformation to a $14$ dimensional affine ODE.

\begin{figure}[t]
{\small
\begin{center}
\begin{tabular}{||l|l|l|| l | l | l || l | l | l || }
\hline
ID & \#V & Deg. & \#B0 & Time & \#B* &  \#B0 & Time & \#B* \\
\hline
 Brusselator & 2 & 3 & 3 & 0.01 & 0 & 25 & 2.8 & 0 \\
 Fitz-Nagumo  & 2 & 3 & 3 & 0.01 & 0 & 25 & 2.6 & 0 \\
Vanderpol & 2 & 3 & 3 & 0.01 & 0 & 25 & 1.9 & 0 \\
Proj-drag & 4 & 2 & 3 & 0.02 & 8 & 10 & 9.7 & 64 \\
Circular & 4 & 2 & 3 & 6 & 0.01 & 10 & 10.6 & 83 \\
Hamiltonian & 5 + 1 & 2 & 3 & 0.02 & 5$\dagger$ & 5 & 1.3 & 20$\dagger$  \\
Two-spring & 4 + 1 & 2 &3 &  0.03 & 2$\dagger$ & 5 & 0.5 & 6$\dagger$ \\
Toda-2 & 7 & 2 & 3 & 0.1 & 22 & 5 & 4.6 & 82 \\
Toda-3 & 10 & 2 & 3 &0.5  & 38 &  5 & 95 & 169 \\
Toda-5 & 16 & 2 & 3 & 6 & 90 & 5 & 6373 & 559 \\
Toda-10 &31 & 2 & 3 &  301.5 & 375 & 5 & \multicolumn{2}{c||}{dnf}    \\
FPUT-3 & 6 + 1& 3 & 3 & 0.05 & 0$\dagger$ & 5 & 3.7 & 2 $\dagger$ \\
FPUT-5 & 10 + 1 & 3 & 3 & 0.4 & 0 & 5 & 231 & 2 \\
Bio-network & 13 & 2 & 3 & 0.07 & 5 & 5 &  4800 & 20 \\
Roundabout & 10 + 4 & 2 & 3 & 1.5 & 68$\dagger$ & 5 & 890 & $\geq 600\ \dagger$ \\
\hline
\end{tabular}
\end{center}
}
\caption{Experimental evaluation results on non linear polynomial ODE
  benchmarks at a glance. Legend: \textbf{\#V} denotes number of
  system variables + parameters, \textbf{Deg.}: max. degree of the
  RHS, \textbf{\#B0:} degree limit for monomials in the initial basis,
  \textbf{Time}: timing in seconds, \textbf{\#B*:} number of elements
  in the final basis, $\dagger$: some elements of the basis involving
  just the parameters were discarded from the count and \textbf{dnf:}
  did not finish in 2hrs or out of memory crash. }\label{Table:continuous-expt-results}
\end{figure}

\subsection{Discrete Systems}

We now describe experimental results on some discrete programs. We
used a set of benchmark programs that require non-linear invariants to
prove correctness compiled by Enric Carbonell~\footnote{The benchmark
  instances are available on-line
  at~\url{http://www.lsi.upc.edu/~erodri/webpage/polynomial_invariants/list.html}.}. Our evaluation focuses on a subset of benchmarks that have non-linear
assignments or guards in them. The methods presented
here converge in a single step with the initial basis whenever the
program being considered already has affine updates.

\begin{figure}[t]
\begin{center}
\begin{tabular}{cc}
{\begin{minipage}{4.5cm}
\small
\begin{lstlisting}
int fermat(int N, int R)
   pre (N >= 0 && R >= 0);
   int u,v,r;
   u := 2*R -1;
   v := 1;
   r := R*R -N;
1: while ( r != 0 ){
2:  while (r > 0)
     (r,v) := (r-v, v+2);
3:  while (r < 0)
     (r,v) := (r+u, u+2);
   }
end
\end{lstlisting}
\end{minipage}} &
\begin{minipage}{5cm}
{\small
\[  \begin{array}{l}
-4  r -  v^2 -4  N v + 2  v +  u^2 -2  u \ \leq\  0 \ \land\\
   -  r -  N u \ \leq\  0 \ \land\  1 -  v^2 \ \leq\  0 \ \land\\
   1 -  uv \ \leq\  0 \ \land\   -  R v +  R \ \leq\  0 \ \land\\
   1 -  v \ \leq\  0 \ \land\   1 -  u^2 \ \leq\  0 \ \land\\
   -  R u + 2  R^2 +  R \ \leq\  0 \ \land\ -  N u \ \leq\  0 \ \land\\
   1 -  u \ \leq\  0 \ \land\   -  R^2 \ \leq\  0 \ \land\   -  N R  \ \leq\  0 \ \land\\
   -  R \ \leq\  0 \ \land\   -  N^2 \ \leq\  0 \ \land\\
   v^2 -2  v -  u^2 + 2  u \ \leq\  0 \ \land\\
   1 +  r -  u -  R^2 \ \leq\  0 \ \land\
   1 + 4  r -  u^2 \ \leq\  0 \ \land\\
   4  r +  v^2 -2  v -  u^2 + 2  u \ \leq\  0 \ \land\\
   2 + 6  r -  uv -  u^2 -2  R^2 \ \leq\  0 \ \land\\
   4  r + v^2 -2  v -  u^2 + 2  u + 4  N= 0 
\end{array} \]
}
\end{minipage}
\end{tabular}
\end{center}
\caption{Fermat's algorithm for prime factorization taken from
  Bressoud~\cite{Bressoud/1989/Factoring} and invariants computed at
  location 1 using polyhedral analysis of the
  linearization.}\label{Fig:fermat-factorization}
\end{figure}
\paragraph{Fermat Factorization:}
Figure~\ref{Fig:fermat-factorization} shows a program for finding a
factor of a number $N$ near its square root taken from a book by
Bressoud~\cite{Bressoud/1989/Factoring}. Our analysis initialized with
monomials of degree up to $2$ over the program variables yields a final
basis consisting of $17$ polynomials. The resulting affine system is
analyzed by a polyhedral analyzer using abstract interpretation to
yield invariants. Some of the invariants obtained at the loop head are
shown in Figure~\ref{Fig:fermat-factorization}. The equality invariant
\[ 4  r + v^2 -2  v -  u^2 + 2  u + 4  N= 0  \]
is obtained at locations $1,2$ and $3$ in the program. This forms a key
part of the program's partial correctness proof. 

\begin{figure}[t]
\begin{center}
\begin{tabular}{cc}
{\begin{minipage}{5cm}
\scriptsize
\begin{lstlisting}
int productBR(int x, int y)
   pre (x >= 0 && y >= 0);
   int a,b,p,q;
   (a,b,p,q) := (x,y,1,0);
1: while ( a >= 1 && b >= 1 ){
     if ( a mod 2 == 0 && b mod 2 == 0)
       (a,b,p) := (a/2, b/2, 4 * p);
     elsif (a mod 2 == 1 && b mod 2 == 0)
        (a,q) := (a-1, q+ b*p);
     elsif (a mod 2 == 0 && b mod 2 == 1)
        (b,q) := (b-1, q + a*p);
     else 
        (a,b,p) := (a-1, b-1, 
                    q + (a+b-1)*p);
   }
end
\end{lstlisting}
\end{minipage}} &
\begin{minipage}{5cm}
{\small
\[  \begin{array}{l}
1 -p^2 \leq 0 \ \land\   -y p + y \leq 0 \ \land\\
   -x p + x \leq 0 \ \land\   1 -p \leq 0 \ \land\\
   -1 -b  \leq 0 \ \land\   -1 -a \leq 0 \ \land\\
   -y \leq 0 \ \land\   -x \leq 0 \ \land\\
   2  a p -x p -2  a + x \leq 0 \ \land\\
  7  a p -x p -7  a -14  x \leq 0 \ \land\\
   7  a p -x p + 8  a -14  x \leq 0 \ \land\\
  16  a p -3  x p -16  a -12  x \leq 0 \ \land\\
   2  b p-y p -2  b  + y \leq 0 \ \land\\
   7  b p-y p -7  b  -14  y \leq 0 \ \land\\
  7  b p-y p + 8  b  -14  y \leq 0 \ \land\\
   16  b p -3  y p -16  b  -12  y \leq 0
\end{array} \]
}
\end{minipage}
\end{tabular}
\end{center}
\caption{A multiplication algorithm and loop invariant computed using polyhedral analysis on the linearization.}\label{Fig:prod-br-example}
\end{figure}
\paragraph{Product of Numbers:} Consider the benchmark shown in
Figure~\ref{Fig:prod-br-example} that seeks to compute the product of
its arguments $x,y$. Our approach initialized using degree $2$
monomials computes an abstract system with $20$ basis polynomials that
in turn yields an affine transition system with $20$
variables. Figure~\ref{Fig:prod-br-example} shows the invariants
computed using polyhedral abstract interpretation. The invariant $q -
abp = 0$ cannot be established by our technique with degree $2$
monomials. On the other hand, it can be established by considering
degree $3$ monomials in the initial basis. The resulting system
however has $60$ variables, making polyhedral analysis of the system
as a whole hard.

\begin{figure}[t]
\begin{center}
\begin{tabular}{cc}
\begin{minipage}{4cm}
{\small 
\begin{lstlisting}
int geoSum(int a, int r, 
                  int n )
   int s := 0;
   int p := a;
   int k := 0;
   while (k < n) 
      s := s + p;
      p := p * r;
      k := k + 1;
   end
\end{lstlisting}
}
\end{minipage}
& \begin{minipage}{5cm} {\small 
\[\begin{array}{l}
 -k^2 + k \ \leq\  0 \ \land\   -2 -k^2 + 3  k \ \leq\  0 \ \land\\
   -6 -k^2 + 5  k \ \leq\  0 \ \land\   -9 -k^2 + 6  k \ \leq\  0 \ \land\\
   -k \ \leq\  0 \ \land\   -r^2 \ \leq\  0 \ \land\   -n \ \leq\  0 \ \land\\
   -1 + k -n \ \leq\  0 \ \land\   -p + s - r s + a = 0 
\end{array}\]}
\end{minipage}
\end{tabular}
\end{center}
\caption{Geometric summation program and computed loop invariant.}\label{Fig:geometric-example}
\end{figure}
\paragraph{Geometric Summation: } Consider the geometric summation
program in Figure~\ref{Fig:geometric-example}. Our approach computes a
linearization with $5$ variables in the abstract system. Polyhedral
analysis of the resulting program yields the invariant $ (1 - r) s = a
- p $. This invariant together with the invariant $ p = a r^k$ (which
cannot be obtained through algebraic reasoning) suffices to prove the
partial correctness of the program.

\begin{figure}[t]
\begin{center}
\begin{tabular}{||l|l|l|l||l|l|l||l|l||}
\hline
\multicolumn{4}{||c||}{System} & \multicolumn{3}{c||}{Linearization}  & \multicolumn{2}{c||}{Analysis} \\
\hline
ID & \#V& \#Trs & Deg & B0 & \#B* & Time & Time & \#I \\
\hline
\hline
Petter2 & 2 & 1 & 2 & 2 & 3 & 0.02 &$\leq$ 0.01  & 10 \\
Petter3 & 2 & 1 & 2 & 3 & 1 & 0.02 & $\leq$ 0.01 & 2\\
Petter3 & 2 & 1 & 3 & 3 & 4 & 0.02 &  $\leq$ 0.01 & 25\\
Geo & 6 & 2 & 2 & 2 & 6 & 0.02 & $\leq$ 0.01 & 9 \\
Fermat & 5 & 6 & 2 & 2 & 17 & 0.04 & 0.5 & 26 \\
Prodbr & 7 & 5 & 2 & 2 & 20 & 0.06 & 2.0 & 19 \\
Euclidex1 & 11 & 5 & 2 & 2 & 51 & 0.66 & \multicolumn{2}{c||}{DNF} \\
\hline
\end{tabular}
\end{center}
\caption{Timings for computing abstractions of discrete systems and
  analyzing the resulting abstractions. Legend: \textbf{\#V} denotes
  number of system variables, \textbf{\#Trs}: number of transitions,
  \textbf{Deg.}: max. degree of the RHS, \textbf{B0:} degree limit for
  monomials in the initial basis, \textbf{Time}: timing in seconds,
  \textbf{\#B*:} number of elements in the final basis, \textbf{\#I:}
  invariants computed and \textbf{DNF:} did not
  finish in 2hrs or out of memory crash. }\label{Fig:expt-results-discrete}
\end{figure}

\section{Conclusion and Future Directions}
Thus far, we have presented an approach that uses Change-Of-Bases
transformation for inferring abstractions of continuous, discrete and
hybrid systems. We have explored the theoretical underpinnings of our
approach, its connections to various invariant generation techniques
presented earlier.  Our previous work presents an extension of the
approach presented in this paper to infer differential inequality
abstractions~\cite{Sankaranarayanan/2011/Automatic}. Similar
extensions for discrete systems remain unexplored. Furthermore, the
use of the abstractions presented here to establish termination for
transition systems is also a promising line of future research.
Future research will also focus on the use of Lie symmetries to reduce
the size of the ansatz or templates used in the search for conserved
quantities and CoB
transformations~\cite{Goktas+Hereman/2011/Symbolic}.

\bibliographystyle{abbrvnat}

\end{document}